%documentclass[a4paper]{IEEEtran}
\documentclass[a4paper,12pt,draftclsnofoot,onecolumn]{IEEEtran}
\usepackage{setspace}
\doublespacing
\setlength\textwidth{6.5in}

\usepackage{amsmath}
\usepackage{amssymb}
\usepackage{url}
\usepackage{cite}
\usepackage{amsthm}
\usepackage{txfonts}
\usepackage{subfigure}

\allowdisplaybreaks

\usepackage{color}

\usepackage{ifpdf}
\ifpdf
   \usepackage[pdftex]{graphicx}
   \usepackage{epstopdf}
   \DeclareGraphicsRule{.eps}{pdf}{.pdf}{`epstopdf #1}
   \pdfcompresslevel=9
%   \epstopdfDeclareGraphicsRule{.gif}{png}{.png}{%
%     convert gif:#1 png:\OutputFile
%   }
   \AppendGraphicsExtensions{.gif}
\else
   \usepackage[dvips]{graphicx}
\fi

\theoremstyle{plain}% default
\newtheorem{thm}{Theorem}
\newtheorem{lem}[thm]{Lemma}

\theoremstyle{definition}
\newtheorem{defn}{Definition}

\theoremstyle{remark}
\newtheorem{rem}{Remark}

\begin{document}

\title{Energy Efficient Broadcast in Mobile Networks Subject to Channel Randomness}

\author{Zijie~Zhang,~\IEEEmembership{Member,~IEEE,} Guoqiang~Mao,~\IEEEmembership{Senior Member,~IEEE,} and~Brian~D.~O.~Anderson,~\IEEEmembership{Life~Fellow,~IEEE}%
\IEEEcompsocitemizethanks{
\IEEEcompsocthanksitem This work was mainly undertaken while Z. Zhang was associated with the School of Electrical and Information Engineering, University of Sydney and National ICT Australia. E-mail: zijie.zhang@sydney.edu.au.
\IEEEcompsocthanksitem G. Mao is with the School of Computing and Communications, The University of Technology, Sydney and National ICT Australia. E-mail: g.mao@ieee.org.
\IEEEcompsocthanksitem B. D. O. Anderson is with the Research School of Engineering, Australian National University and National ICT Australia. E-mail:
brian.anderson@anu.edu.au.}%
\thanks{This research is supported by Australian Research Council (ARC) Discovery projects DP110100538 and DP120102030. }%
}

\maketitle

\begin{abstract}
Wireless communication in a network of mobile devices is a challenging and resource demanding task, due to the highly dynamic network topology and the wireless channel randomness. This paper investigates information broadcast schemes in 2D mobile ad-hoc networks where nodes are initially randomly distributed and then move following a random direction mobility model. Based on an in-depth analysis of the popular Susceptible-Infectious-Recovered epidemic broadcast scheme, this paper proposes a novel energy and bandwidth efficient broadcast scheme, named the energy-efficient broadcast scheme, which is able to adapt to fast-changing network topology and channel randomness.  Analytical results are provided to characterize the performance of the proposed scheme, including the fraction of nodes that can receive the information and the delay of the information dissemination process. The accuracy of analytical results is verified using simulations driven by both the random direction mobility model and a real world trace.
\end{abstract}

\begin{IEEEkeywords}
mobile ad-hoc networks, shadowing, fading, connectivity, epidemic broadcast
\end{IEEEkeywords}

%%%%%%%%%%%%%%%%%%%%%%%%%%
%%%%%%%%%%%%%%%%%%%%%%%%%%
%%%%%%%%%%%%%%%%%%%%%%%%%%
%%%%%%%%%%%%%%%%%%%%%%%%%%
\section{Introduction}
A mobile ad-hoc network (MANET) is a self-organizing network comprising mobile devices like smart phones, tablet PCs or intelligent vehicles. 
In a MANET, information dissemination often relies on local ad-hoc connections that emerge opportunistically as mobile devices move and meet each other. The main challenge of data communication in a MANET is the time-varying nature of ad-hoc connections, which is attributable to two major factors: dynamic network topology and channel randomness.
%  
%
%This paper investigates the information dissemination process in large-scale MANETs. 

The dynamic topology of a MANET is caused by the mobility of wireless communication devices. Specifically, as mobile users or intelligent vehicles move over time, the distances between mobile devices are changing constantly, resulting in time-varying wireless links. 
This makes many commonly-used routing protocols (e.g. the well-known Ad hoc On Demand Distance Vector (AODV) \cite{Perkins2} or a basic flooding broadcast algorithm \cite{williams-comparison-2002}) less effective for MANETs, because they can only disseminate information to the node(s) that is connected to the source by at least one (multi-hop) path at the time instant when the source node transmits. 
Further, differently from the store-forward pattern that AODV relies on to disseminate information, information dissemination in MANETs has to propagate by way of store-\emph{carry}-forward that allows a node to carry the information over a physical distance and forward to other nodes over time. 
It has been widely recognized that the dynamic topology of a MANET often resembles the topology of a human network \cite{vahdat-epidemic-2000,zhang-on-2011-manet}, in the sense that the movement of mobile devices in a MANET is not only similar to, but often governed by, the movement of their human owners. 
In view of this, epidemic schemes (e.g. \cite{vahdat-epidemic-2000}) have been proposed as a fast and reliable approach to broadcast information in MANETs. 
On the other hand, unlike the spreading of epidemic disease in human networks, information dissemination schemes in MANETs can often be carefully designed to meet certain goals in addition to information delivery, such as achieving energy efficiency or fulfilling certain delay constraint, as shown in this paper.

In addition to the fast-changing network topology, channel randomness also has a significant impact on the performance of information broadcast in MANETs. It has been shown that channel shadowing has negative impacts on information dissemination in MANETs employing traditional routing algorithms like AODV \cite{qin-on-2003}. 
Moreover, the wireless connection between two devices can also be affected by the availability of spectrum resource. Due to scarcity of the radio frequency spectrum, a frequency band is usually shared by more than one service. Consequently, wireless communication between two mobile devices is affected by temporal availability of spectrum band in the vicinity of the devices. Due to these factors, i.e. the uncertainty in the availability of spectrum band as well as the shadowing and fading effects, wireless connections between nodes are random and time-varying, which must be considered in the design of information dissemination schemes for MANETs. Therefore, it is challenging to establish a distributed broadcast scheme for MANETs that is adaptive to both dynamic topology and channel randomness, while meeting the performance objectives, i.e. delay constraint and energy efficiency. %A typical example is to determine whether or not a node should transmit, given a transmission opportunity (e.g. a spectrum hole), so that a piece of information can be disseminated to $x\%$ of nodes in the network within time $t$

%In this work, we consider the channel randomness and propose a novel broadcast scheme for MANETs. Further, the effectiveness of the proposed scheme is investigated by analyzing the percolation probability and the fraction of informed nodes. 

%We study the connectivity of large-scale ad-hoc cognitive radio networks, where secondary users exploit spectrum holes: viz. channels temporarily and locally unused by primary users. Due to the uncertainty of available spectrum holes or/and the mobility of users, connectivity of secondary networks in a challenging issue.

This paper presents an energy-efficient broadcast scheme for MANETs to address the aforementioned challenges. The design of the broadcast scheme is based on an in-depth analysis of the advantages and inadequacies of the widely-used epidemic broadcast schemes. %while achieving certain goals, e.g. reducing energy or bandwidth consumption. 
More specifically, the following contributions are made in the paper: 
\begin{enumerate}
\item an energy-efficient broadcast scheme is proposed, motivated by the analysis of the information dissemination process using the Susceptible-Infectious-Recovered (SIR) scheme; 
\item  analytical results are presented on the fraction of nodes that receive the information broadcast by an arbitrary node in a network using the proposed broadcast scheme;
\item it is shown that in comparison with the traditional epidemic schemes, the proposed energy-efficient broadcast scheme consumes less energy and bandwidth while meeting the same performance goal, measured by the fraction of recipients; 
\item the optimal design of the proposed broadcast scheme that achieves the best performance using the minimum amount of energy and bandwidth is proposed;
\item analytical results on the delay required for the information to be received by a pre-designated fraction of nodes in the network are derived; 
\item  different from many existing studies (e.g. \cite{vahdat-epidemic-2000,zhang-on-2011-manet,khabbazian-efficient-2009,chen-information-2010,thedinger-store-2010,friedman-gossiping-2007}), the analysis in this paper takes into account the channel randomness; %the results are applicable to various types of MANETs with practical media access control protocols, such as CSMA, TDMA or ALOHA; 
 it is shown (in Sections \ref{section_efficiency} and \ref{section_RDM_simulation}) that shadowing effects improve the performance of the proposed broadcast scheme, measured by the fraction of recipients and delay, which is in contrast to previous studies, e.g. \cite{qin-on-2003}, considering traditional routing algorithms; 
\item the accuracy of analytical results is verified using simulations driven by both the random direction mobility model and a real world trace.
\end{enumerate}

Note that in scenarios where some
network infrastructure is available, i.e. heterogeneous networks,
the communication between devices and base stations can have a great
potential of improving network performance in terms of connectivity,
delay and throughput \cite{lee-on-2013,li-an-2013}.
Though this work focuses on homogeneous ad-hoc networks only, the
analysis can also be useful for further exploration into heterogeneous
networks with infrastructure support \cite{li-multiple-2013,taghizadeh-distributed-2013,bo-mobile-2012},
particularly when infrastructure is introduced to facilitate multi-hop
communications among nodes, or in multi-hop networks with limited
infrastructure support. These networks include mobile social networks
for news/advertisement delivery \cite{ioannidis-optimal-2009}, emergency
communication networks in the event of natural disasters \cite{lien-a-2009},
battlefields communications, emergency rescue and conferences \cite{jeong-a-2014}.

The rest of this paper is organized as follows: Section~\ref{section_related_work} reviews related work. Section~\ref{section_system_model} introduces the system model. 
The analysis of the information dissemination process is presented in Section~\ref{section_analytical}. Section~\ref{section_simulation} validates the analysis using simulations. Finally Section~\ref{section_conclusion} concludes this paper and discusses possible future work.

%%%%%%%%%%%%%%%%%%%%%%%%%%
%%%%%%%%%%%%%%%%%%%%%%%%%%
%%%%%%%%%%%%%%%%%%%%%%%%%%
%%%%%%%%%%%%%%%%%%%%%%%%%%
\section{Related work}\label{section_related_work}
A number of existing studies on information dissemination in wireless networks (e.g. \cite{khabbazian-efficient-2009}) focused on connected networks, where a network is said to be \emph{connected} at a time instant if and only if (iff) there is at least one multi-hop path connecting any pair of nodes. 
In mobile networks, it is often unnecessary or impractical to require that a network is \emph{always} connected \cite{vahdat-epidemic-2000,mao-graph-2009}, due to fast-changing network topology or channel randomness. Hence this paper studies information dissemination in MANETs from a percolation perspective, as described formally in Section \ref{section_analytical}. 

Epidemic broadcast schemes are popularly used for information dissemination in MANETs. In \cite{chen-information-2010}, Chen et al. studied the information dissemination process using a Susceptible-Infectious (SI) epidemic scheme, where every node carries the received information and forwards it to all nodes coming into the radio range. The SI epidemic scheme is a reliable but costly scheme due to a lack of a proper mechanism to stop the transmission. 
Considering a Susceptible-Infectious-Recovered (SIR) epidemic scheme, our previous work \cite{zhang-on-2011-manet} studied the information dissemination process in a MANET. The SIR epidemic scheme postulates that nodes need to keep transmitting 
%(viz. keeps probing and trying to transmit to every node coming into the radio range)
 for a prescribed time period before recovery (i.e. stopping the transmission); nevertheless a long continuous transmitting period required by the scheme can be difficult or costly to implement in reality. 
Then in a conference version of this work \cite{zhang-opportunistic-2013}, we took shadowing effect into account and proposed the opportunistic broadcast scheme. 
This paper takes a further step by taking fast fading effect into consideration and investigating the optimal design for the opportunistic broadcast scheme that minimises the resource consumption. 

In \cite{clementi-opportunistic-2013}, Clementi
et al. studied the speed of information propagation in a mobile network
where nodes move independently at random over a square area. They obtained an upper bound on the flooding time, which is
the maximum time required for all nodes of the network to be informed.
In \cite{jacquet-information-2009-conf}, Jacquet et al. studied the
information propagation speed in mobile networks where nodes are uniformly
distributed in a bounded area. The nodes were assumed to move following
an i.i.d. random trajectory. An upper bound on the information propagation
speed, viz. ratio of the distance traveled by information
over a given amount of time, was obtained.

Different from the aforementioned two studies
on the information propagation speed (i.e. \cite{clementi-opportunistic-2013,jacquet-information-2009-conf}),
this paper focuses on analysing the fraction of nodes that can receive
the information. We choose to focus on the fraction of recipients
nodes because it is a key performance metric for information broadcast
in a network using the SIR scheme. Compared with the Susceptible-Infectious
(SI) forwarding scheme used in \cite{clementi-opportunistic-2013,jacquet-information-2009-conf},
using the SIR scheme a relay node does \emph{not}
forward the received packets indefinitely, which reflects the real
world scenarios where mobile devices usually have limited energy supply
and/or buffer size. On the other hand, unlike the SI-like schemes,
the SIR scheme does not guarantee that the information can be received
by \emph{all nodes} in a network, and consequently the fraction of recipients
becomes a key performance metric, which is studied in this paper.

Moreover, a number of existing work in this
area (e.g. \cite{clementi-opportunistic-2013,li-rbtp-2013,jacquet-information-2009-conf,khabbazian-efficient-2009})
considered the unit disk model, under which two nodes are directly
connected if and only if the Euclidean distance between them is not
larger than the radio range $r_{0}$. Differently, this work takes
shadowing and fading effects into consideration and shows that channel
randomness can significantly affect the performance of information
dissemination in mobile ad-hoc networks.

%Broadcast schemes can also be useful for end-to-end transmission of information between a pair of nodes in MANETs. 
%
%In \cite{thedinger-store-2010}, Thedinger et al. empirically studied a repeated flooding algorithm for end-to-end information dissemination in MANETs, where each node broadcasts the received information to its directly-connected neighbours repeatedly and the consecutive re-broadcasts are separated by a fixed time interval. 
%Through simulations, they showed that the probability that the information is received by its destination increases as the number of re-broadcasts of each node increases. 
%In contrast, this paper presents analytical results to motivate the use and characterize the performance of a repeated broadcast scheme. Further, the time interval between re-broadcasts is not fixed in this paper; hence the broadcast scheme is more adaptive to channel randomness, as shown later, compared with that considered in \cite{thedinger-store-2010}. 
%
%In \cite{groenevelt-the-2005}, Groenevelt et al. studied the end-to-end message delay in a sparse MANET using an unrestricted multi-copy protocol like the SI epidemic scheme. They showed that the successive meeting time of two nodes \emph{approximately} follows an exponential distribution. This \emph{exponential inter-meeting time} has been used as a fundamental assumption in later research (e.g. \cite{zhang-performance-2007}). Note that our analysis does not rely on any approximation for the inter-meeting time. 

There are other broadcast schemes for MANETs beside epidemic schemes. 
In \cite{friedman-gossiping-2007}, Friedman et al. reviewed some gossip-based algorithms that can be suitable candidates for information dissemination in MANETs. They pointed out that the design of energy and bandwidth efficient information dissemination schemes for MANETs is a challenging and open problem.  A recent work \cite{gao-on-2013} of Gao et al. proposed a novel data forwarding strategy which exploits the transient social contact patterns in social networks. They identified that the low nodal density and the lack of global information are two major challenges in effective data forwarding in delay tolerant networks. We further note that in \cite{Xiang13Energy}, Xiang and Ge  et al. first proposed an energy efficiency model for wireless cellular networks, and on that basis built an analytical relationship among the wireless traffic, wireless channel model and the energy efficiency. Their work suggested that the energy efficiency optimization of wireless networks should consider the wireless channel randomness. In this paper, these challenges are tackled by introducing a distributed energy-efficient broadcast scheme taking into account channel randomness. 
%%
%In\cite{wu-logarithmic-2007}, Wu et al. studied the use of ferries, viz. nodes with controllable mobility, to improve information dissemination in ad-hoc networks. In contrast, this paper focuses on MANETs where all nodes move randomly and uncontrollably. 

Recently, vehicular ad-hoc networks have been
identified as another valuable application of mobile ad-hoc networks
\cite{karagiannis-vehicular-2011,pagano-is-2013}. Due to the rapid
development of autonomous driving cars, vehicle-to-vehicle communication
network or connected vehicle technology has been attracting increasing interest \cite{luan-enabling-2014}. Due to the relatively slower speed of
infrastructure deployment, in a rather long transitional period, communications
in many of these vehicular networks may receive limited infrastructure
support. Further, the controversial but increasingly popular car-sharing
services \cite{zhang-check-2014} create another future application
for mobile ad-hoc networks. For example, the ad-hoc communication
network formed by vehicles and mobile users can significantly reduce
the requirement and dependance on mobile network operators. In light
of these interesting developments, the simulation section uses vehicle
ad-hoc network as a typical example to evaluate the analytical results
presented in this paper.

%%%%%%%%%%%%%%%%%%%%%%%%%%%%%%%%%%%%%%%
%%%%%%%%%%%%%%%%%%%%%%%%%%%%%%%%%%%%%%%
\section{System model}\label{section_system_model}
\subsection{Network model}
Suppose that at some initial time instant ($t=0$), a set of $N$ nodes are independently, randomly and uniformly placed on a torus $(0,L]^2$ \cite{franceschetti-random-2007}. It follows that the nodal density is $\lambda=N/L^2$. Then the nodes start to move according to the random direction model (RDM) \cite{camp-a-2002}. Specifically, at time $t=0$, each node chooses its direction independently and uniformly in $[0,2\pi)$, and then moves in this direction thereafter at a constant speed $V$.  It has been shown in \cite{nain-properties-2005} that under the aforementioned model, at any time instant $t\geq 0$, the spatial distribution of nodes is stationary and still follows the uniform distribution. 
Note that the uniform spatial distribution and random direction mobility are both simplified but widely-used models in this field \cite{franceschetti-random-2007,camp-a-2002,nain-properties-2005} to facilitate the analysis. Further, we also evaluate the applicability of our analysis in networks whose nodal distribution and mobility deviate from the above assumption using a real world mobility trace, which is described in detail in Section \ref{section_trace_simulation}.

A commonly-used radio propagation model in this field is the \emph{unit disk model} (UDM), under which two nodes are directly connected iff the Euclidean distance between them is not larger than the radio range $r_0$. More specifically, under the UDM, the received signal strength (RSS) at a receiver separated by distance $x$ from the transmitter is $P_u(x)=Cp_tx^{-\eta}$, where $C$ is a constant, $p_t$ is the transmission power common to all nodes and $\eta$ is the path loss exponent \cite{rappaport-wireless-2002}. A transmission is successful iff the RSS exceeds a given threshold $p_{\min}$. Therefore, the required transmission power $p_t$ allowing a radio range $r_0$ is $p_t=\frac{p_{\min}}{C}r_0^{\eta}$.

The UDM is simple but unrealistic. In reality, the RSS may have significant variations around the mean value; this is typically taken into account in the \emph{log-normal shadowing model} (LSM) \cite{rappaport-wireless-2002}. 
Under the LSM, the RSS attenuation (in dB) follows a Gaussian distribution% with respect to the distance $x$ between transmitter and receiver
, i.e. there is
$10\log_{10}({P_l(x)}/{Cp_tx^{-\eta}})\sim Z$, 
where $P_l(x)$ is the RSS under LSM and $Z$ is the \emph{shadowing factor}, which is a zero-mean Gaussian distributed random variable with standard deviation $\sigma$. When $\sigma=0$, the model reduces to the UDM \cite{rappaport-wireless-2002}. Denote by $q(z)$ the probability density function (pdf) of the shadowing factor $Z$; then:
\begin{equation}\label{eq_pdf_Z}
q(z)=\frac{1}{\sigma \sqrt{2\pi}}\exp\left({-\frac{z^2}{2\sigma ^2}}\right).
\end{equation}

Following common practice in the field \cite{rappaport-wireless-2002}, we consider that the shadowing factors $Z$ between all pairs of transmitter and receiver are independent and all links are symmetric. 
%
%Note that the use of torus implies that every node has the same number of neighbors probabilistically. Further, impact of the boundary effect of a non-torus topology is studied by simulations in Section \ref{section_simulation}.  It can be seen later that consideration of a torus topology helps to simplify the analysis while delivering reasonably accurate results, and the torus topology has been widely used in this area \cite{nain-properties-2005, franceschetti-random-2007,groenevelt-the-2005}.

In addition to the above large-scale fading, this paper considers a general model of small-scale fading, i.e. the \emph{Nakagami-m} fading model \cite{simon-digital-2000}. Assuming the Nakagami-m fading, the RSS per symbol, $\hat\omega$, is distributed according to a Gamma distribution with the following pdf \cite{simon-digital-2000}: 
\begin{eqnarray}
\textcolor{blue}{\hat\zeta(\hat\omega)=\frac{m^m\hat\omega^{m-1}}{(\mathbb{E}[\hat\omega])^m\Gamma(m)} \exp\left(-\frac{m\hat\omega}{\mathbb{E}[\hat\omega]}\right), ~ ~ \hat\omega\geq 0,}
\end{eqnarray}
where $\Gamma(.)$ is the standard Gamma function and $\mathbb{E}[\hat\omega]= P_l(x)$ is the mean RSS (over time), which is determined by the path loss and shadowing effects. By choosing different values for the parameter $m$, the Nakagami-m fading model easily includes several widely-used fading distributions as its special cases \cite{simon-digital-2000}. 

To incorporate both shadowing and fading effects, we adopt a wireless connection model, named the \emph{general connection model}, which is built on the basis of the random connection model introduced in \cite{franceschetti-random-2007}. 
Specifically, let $P_g(p_t x^{-\eta}, Z, \Omega)$: $\Re^+ \times \Re\times\Re \rightarrow [0, 1]$ be the RSS at a receiver separated by distance $x$ from the transmitter, where $Z$ and $\Omega$ (called \emph{channel factors}) are random variables representing the random variation of the RSS caused by shadowing effect and small-scale fading effect respectively. 
%Note that  $Z, \Omega \in \Re$ are both important factors determining the quality of wireless channel between two nodes. 
%
Note that the analysis in this paper allows a general form of the RSS function $P_g(p_t x^{-\eta}, Z, \Omega)$, and the analytical results under the log-normal shadowing model with Nakagami fading (called the \emph{Log-normal-Nakagami model}) are provided in Section \ref{section_analytical} as a typical example. 
%These conditions ensure that the function is meaningful in real world, which is discussed in detail later.
Lastly, to be practically meaningful in modelling the RSS attenuation, it is assumed that the RSS $P_g(p_t x^{-\eta}, Z, \Omega)$ is a non-decreasing function of $p_t x^{-\eta}$, $Z$ and $\Omega$ respectively.

%Further, it is assumed that a packet transmission between two nodes finishes instantaneously if two nodes are directly connected \cite{ioannidis-optimal-2009}. The transmission time of a small piece of information is usually in the order of milliseconds\footnote{For example, the time required to transmit a packet of size 256 bytes via a 10Mbps link is $\frac{256\times 8}{10M}=204.8\mu s$.}. The displacement of a node during this small time interval is negligible (hence ignored hereafter) compared with the radio range.

%%%%%%%%%%%%%%%%%%%%
%\subsection{Medium access control (MAC)}
%In wireless networks, access to the available frequency channels is controlled by a coordination mechanism called the MAC \cite{IEEE80211}. A MAC protocol schedules in time the transmissions of nearby devices in such a way that devices whose radio transmissions may interfere with each other do not get access to the wireless channel at the same time. The presence of the MAC introduces time dynamics of data communications in wireless networks.

%%%%%%%%%%%%%%%%%%%%%%%%%%%%%%%%%%%%%%%
%%%%%%%%%%%%%%%%%%%%%%%%%%%%%%%%%%%%%%%
\subsection{Broadcast scheme}
Suppose that a piece of information is broadcast from an arbitrary node. Once a node receives the information for the first time, it becomes \emph{infectious}. The infectious node holds the information for a fixed amount of time $\tau_s$ (called the \emph{sleep time interval}) followed by a random amount of time $\tauup_r$ (to be described in the next paragraph), then re-transmits the information (to all nodes directly connected to the infectious node) once. Such a \emph{sleep-active cycle} repeats for a fixed number of times, denoted by a positive integer $\beta$, after which the node \emph{recovers}. A {recovered} node stops transmitting the information and will ignore all future transmissions of the same information.  
The information dissemination process naturally stops (i.e. reaches the \emph{steady state}) when there is no infectious node in the network; and the nodes that have received the information are referred to as the \emph{informed nodes}. It is obvious that the fraction of informed nodes is a key performance metric of information dissemination in a given network. Animation of an information dissemination process is available on \cite{my_manet_web}. 

Note that the time interval between two consecutive transmissions is determined by two additive components: a pre-designated sleep time interval $\tau_s$ and a random time interval $\tauup_r$. 
The pre-designated waiting period $\tau_s$ is chosen to allow sufficient time (e.g. $V\tau_s\geq 2r_0$) for a node to move away from the location of its previous broadcast, thereby reducing redundant transmission to the same nodes. 
The random time interval $\tauup_r$ introduces randomness in the transmitting time instants, which can reduce collisions and contention between nodes caused by simultaneous transmissions. 
Further, $\tauup_r$ also reflects the channel access time in some practical scenarios. For example, using carrier sense multiple access (CSMA), if a node finds the channel busy at the end of the pre-designated sleep time interval $\tau_s$, then the node needs to wait a random time interval (viz. random back-off time) for the next transmission opportunity. % In this case, the distribution of the random back-off time determines the distribution of $\tauup_r$. 
We include $\tauup_r$ in our broadcast scheme to introduce flexibility in determining the transmitting time of each infectious node, so that a node can transmit at its convenience (e.g. when the node using CSMA senses the channel idle) in a decentralized manner while the performance of broadcast in the whole network (e.g. measured by the fraction of informed nodes) is still guaranteed. These features are valuable for a MANET subject to dynamic topology and channel randomness. In view of these features, the above scheme is referred to as an \emph{energy-efficient broadcast scheme}.

%Denote by $p_\tau(\tau)$ the pdf of $\tau$. 

Note that when $\tauup_r$ takes a constant value $0$, $\tau_s\rightarrow 0$ and $\beta \tau_s$ is a positive value, the above broadcast scheme becomes a traditional SIR epidemic scheme (c.f. \cite{vahdat-epidemic-2000, zhang-on-2011-manet}) with active period $\beta\tau_s$ seconds. Further, when $\beta\tau_s\rightarrow\infty$, the broadcast scheme becomes a SI epidemic scheme. The advantage of the proposed broadcast scheme compared with the traditional SIR scheme is discussed in detail later in Section \ref{section_efficiency}.

The network described above is denoted by $\mathcal{G}(\eta,\sigma,\lambda,V,r_0,\beta,\tau_s,\tauup_r)$. Further, we assume a sufficiently large network, i.e. $L\gg \beta(\tau_s+\max\{\tauup_{r}\})V$, so that a node will not be wrapped, through its motion in the torus, back to the point where it became infected before it recovers.

%%%%%%%%%%%%%%%%%%%%%%%%%%
%%%%%%%%%%%%%%%%%%%%%%%%%%
%%%%%%%%%%%%%%%%%%%%%%%%%%
%%%%%%%%%%%%%%%%%%%%%%%%%%
\section{Analysis of the information dissemination process}\label{section_analytical}
%In this section, we first study the probability that two nodes are directly connected, which is required for the calculation of a key metric, called the effective node degree, studied in Section \ref{section_degree}. Then the fraction of informed nodes is characterized from two aspects: the percolation probability and the expected fraction of informed nodes, in Section \ref{section_percolation} and Section \ref{section_fraction} respectively. Section \ref{section_efficiency} studies the energy and bandwidth efficiency of the proposed broadcast scheme. Then delay of the information dissemination process is studied in Section \ref{section_delay}. Lastly, Section \ref{section_optimization} discusses the optimal design of the broadcast scheme.

%%%%%%%%%%%%%%%%%%%%%%%%%%%%%%
%%%%%%%%%%%%%%%%%%%%%%%%%%%%%%
%%%%%%%%%%%%%%%%%%%%%%%%%%%%%%
%%%%%%%%%%%%%%%%%%%%%%%%%%%%%%
\subsection{The probability of direct connection}
In order to incorporate channel randomness into the analysis, we need the following lemma.

\begin{lem}\label{lem_rcm}
Consider a set of transmitter-receiver pairs where the wireless channel between each pair has shadowing factor $z$ and fading factor $\omega$.  Then there exists a constant value $r_N(z, \omega)$, called the \emph{radio range given the channel factors $z$ and $\omega$}, such that an arbitrary transmitter-receiver pair in the aforementioned set of transmitter-receiver pairs is directly connected iff the Euclidean distance between the transmitter and receiver is less than or equal to $r_N(z, \omega)$. Further, $r_N(z, \omega)$ is the solution of $r$ to the equation $P_g(p_t r^{-\eta}, z, \omega)=p_{\min}$.
\end{lem}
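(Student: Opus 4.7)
The plan is to exploit the monotonicity assumption on the RSS function $P_g$ stated at the end of the network model subsection, together with the fact that the path-loss factor $p_t x^{-\eta}$ is strictly decreasing in the transmitter--receiver separation $x$ (since $\eta>0$ and $p_t>0$). Direct connection is defined by RSS exceeding the threshold $p_{\min}$, so the question reduces to identifying the sublevel set $\{x\ge 0:P_g(p_t x^{-\eta},z,\omega)\ge p_{\min}\}$ for fixed channel factors $z$ and $\omega$.

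First, I would fix $z$ and $\omega$ and view $f(x):=P_g(p_t x^{-\eta},z,\omega)$ as a function of $x>0$ alone. Because $P_g$ is non-decreasing in its first argument and $x\mapsto p_t x^{-\eta}$ is strictly decreasing on $(0,\infty)$, the composition $f$ is non-increasing in $x$. Consequently the set $\{x:f(x)\ge p_{\min}\}$ is downward closed, i.e.\ an interval of the form $[0,r_N(z,\omega)]$ (or $[0,r_N(z,\omega))$, depending on left/right continuity, but this distinction is measure-zero and irrelevant for the connection probability computed later).

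Next, I would define $r_N(z,\omega):=\sup\{x>0:f(x)\ge p_{\min}\}$ and verify that it solves $P_g(p_t r^{-\eta},z,\omega)=p_{\min}$. Here I would argue by monotonicity: for $x$ just below $r_N$ we have $f(x)\ge p_{\min}$, and for $x$ just above $r_N$ we have $f(x)<p_{\min}$, so under mild continuity of $P_g$ in its first argument we obtain the equality at $r_N$. I would note that the network model implicitly supposes a well-defined finite radio range, so degenerate cases (where $f$ is identically above or below $p_{\min}$) can be handled by allowing $r_N(z,\omega)\in\{0,\infty\}$ without affecting the iff statement.

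The main obstacle, such as it is, is conceptual rather than technical: the lemma is really just a restatement of the monotonicity assumption, and the delicate part is being careful about whether the relevant inequality is strict or non-strict and whether the defining equation $P_g(p_t r^{-\eta},z,\omega)=p_{\min}$ has a unique solution. Strict monotonicity of $p_t x^{-\eta}$ in $x$ together with even weak monotonicity of $P_g$ in its first argument is enough to make the set of solutions an interval, which in typical channel models collapses to a single point; I would state this explicitly so that later probabilistic computations can treat $r_N(z,\omega)$ as a deterministic function of the channel realisations.
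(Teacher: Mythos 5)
Your proposal is correct and follows essentially the same route as the paper, which simply observes that once $z$ and $\omega$ are fixed the RSS $P_g(p_t x^{-\eta},z,\omega)$ is a non-increasing function of the distance $x$ and lets the conclusion follow; you merely spell out the details (the super-level set is a downward-closed interval, $r_N$ is its supremum, and continuity gives the defining equation) that the paper leaves implicit.
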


\begin{proof}
Given the values of the channel factor $z$ and $\omega$, the randomness in the RSS disappears. Noting that $P_g(p_t x^{-\eta}$, $z$, $\omega$) is a non-increasing function of the distance $x$ between transmitter and receiver, the conclusion readily follows. Hence the proof of the lemma is omitted.
\end{proof}

%\begin{rem}
%Lemma \ref{lem_rcm} reveals the relation between the probability that two nodes are directly connected and the distance between two nodes. This enables us to investigate the impact of spatial movement of nodes on the wireless links between nodes subject to shadowing and fading using the results obtained under the UDM, as shown later in the proof of Lemma \ref{lem_cluster_coefficient}.
%\end{rem}

Next, we consider a typical wireless channel model, i.e. the Log-normal-Nakagami model, and obtain the associated value of the radio range.

\begin{lem}\label{lem_r_log}
Suppose that the wireless channel between a transmitter and a receiver is subject to the Log-normal-Nakagami model with channel factors $Z=z$ and $\Omega=\omega$ respectively, then the \emph{radio range given the channel factors} $z$ and $\omega$ is:
\begin{equation}\label{eq_rN}
r_N(z,\omega) \triangleq r_0 \omega^{1/\eta} \exp\left(\frac{z\ln 10}{10\eta}\right),
\end{equation}
where $r_0$ is the radio range under the UDM (c.f. Section \ref{section_system_model}), the pdf of $Z$ is given by Eq. \ref{eq_pdf_Z} and $\Omega$ follows a Gamma distribution with mean $1$.
%where $r_N(z_S)$ is called the radio range of the transmitter $S$.
\end{lem}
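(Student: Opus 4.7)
My plan is to apply Lemma \ref{lem_rcm} directly: the \emph{radio range given the channel factors} $r_N(z,\omega)$ is, by that lemma, the unique positive root of $P_g(p_t r^{-\eta}, z, \omega)=p_{\min}$, so it suffices to obtain an explicit formula for the received-signal-strength function $P_g$ under the Log-normal-Nakagami model and then solve a single algebraic equation.

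First I would assemble the expression for $P_g$ from the two fading mechanisms described in Section \ref{section_system_model}. From the LSM definition $10\log_{10}\!\bigl(P_l(x)/(C p_t x^{-\eta})\bigr)=z$, the mean (post-shadowing) RSS is $P_l(x)=Cp_t x^{-\eta}\,10^{z/10}$. Under Nakagami-m fading, the instantaneous RSS $\hat\omega$ has mean $P_l(x)$ and is of the form $\hat\omega = \omega\,P_l(x)$, where $\omega=\hat\omega/\mathbb{E}[\hat\omega]$ is a Gamma random variable with mean $1$ (this is the rescaling implicit in the paper's normalisation of $\Omega$). Hence
\begin{equation*}
P_g(p_t x^{-\eta},z,\omega) \;=\; \omega\, C\, p_t\, x^{-\eta}\, 10^{z/10}.
\end{equation*}

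Next I would set this equal to $p_{\min}$ at $x=r$, substitute the UDM-derived transmit power $p_t = \frac{p_{\min}}{C}\,r_0^{\eta}$ from Section \ref{section_system_model}, and solve for $r$. The equation reduces to
\begin{equation*}
\omega\, r_0^{\eta}\, r^{-\eta}\, 10^{z/10} \;=\; 1,
\end{equation*}
so that $r^{\eta}=\omega\, r_0^{\eta}\, 10^{z/10}$. Taking $\eta$-th roots and rewriting $10^{z/(10\eta)}=\exp\!\bigl(z\ln 10/(10\eta)\bigr)$ gives the claimed formula \eqref{eq_rN}. Monotonicity of $P_g$ in its first argument (the non-decreasing assumption at the end of the network model subsection) guarantees that this root is unique, so $r_N(z,\omega)$ is well-defined.

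I do not expect a genuine obstacle here; the one point that needs a little care is the normalisation convention for the fading variable. The pdf $\hat\zeta$ stated in the paper is for the un-normalised RSS $\hat\omega$ whose mean depends on $x$, whereas the lemma's $\Omega$ is declared to have mean $1$. Writing $\Omega=\hat\omega/\mathbb{E}[\hat\omega]$ and checking that this is indeed Gamma with unit mean (a standard scaling property of the Gamma family) is the only non-mechanical step; everything else is substitution and algebraic rearrangement.
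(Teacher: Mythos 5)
Your proposal is correct and follows essentially the same route as the paper's proof: both express the RSS under the Log-normal-Nakagami model as $Cp_tx^{-\eta}10^{z/10}\omega$ with $\Omega$ normalised to a unit-mean Gamma variable, then solve the threshold condition against $p_{\min}=Cp_tr_0^{-\eta}$ to isolate the distance. The only cosmetic difference is that you invoke Lemma~\ref{lem_rcm} and solve the equation $P_g(p_tr^{-\eta},z,\omega)=p_{\min}$ directly, while the paper phrases the same algebra as rewriting the event $\{P_g\geq p_{\min}\}$ as $\{x\leq r_0\Omega^{1/\eta}\exp(Z\ln 10/(10\eta))\}$.
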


\begin{IEEEproof}
Denote by $x$ the distance between a transmitter and a receiver. Subject to Nakagami-m fading, the RSS varies around its mean value over time according to a Gamma distribution with mean $P_l(x)$, where $P_l(x)=Cp_tx^{-\eta}10^{Z/10}$ is the RSS under the log-normal shadowing model.

To facilitate the analysis, we introduce a random variable $\Omega$ which follows a Gamma distribution with mean $1$. Therefore, the pdf of $\Omega$ is 
\begin{eqnarray}\label{eq_pdf_w0}
\zeta(\omega)=\frac{m^m\omega^{m-1}}{\Gamma(m)} \exp(-m\omega), ~ ~ \omega\geq 0.
\end{eqnarray}
%where $m$ is introduced in Section \ref{section_system_model}.

Further, it can be shown that for any constant $C_1\in\Re$, the random variable $C_1\Omega$ follows a Gamma distribution with mean $C_1$. Then under the Log-normal-Nakagami model, the RSS at a receiver at distance $x$ from the transmitter is $P_g(p_t x^{-\eta}, Z, \Omega)=P_l(x)\Omega=Cp_tx^{-\eta}10^{Z/10}\Omega$.

Recall that two nodes are directly connected iff the RSS exceeds a given threshold $p_{\min}$. Without shadowing and fading effects, i.e. considering path loss only, there holds $p_{\min}=Cp_tr_0^{-\eta}$ (c.f. Section \ref{section_system_model}). With shadowing and fading, there holds:
\begin{eqnarray}
\Pr\left(P_g(p_t x^{-\eta}, Z, \Omega)\geq p_{\min}\right) 
=  \Pr\left(Cp_tx^{-\eta}10^{Z/10}\Omega \geq Cp_tr_0^{-\eta} \right)
 %&=& \Pr( Z\geq 10\eta\log_{10}(\frac{x}{r_0\Omega^{1/\eta}})) \\
= \Pr\left( x\leq r_0\Omega^{\frac{1}{\eta}}\exp(\frac{Z\ln 10}{10\eta})\right).
% &=& \Pr( \omega \geq (\frac{x}{r_0})^\eta\exp(\frac{z\ln 10}{-10}))
\end{eqnarray}

Conditioned on the channel factors $Z=z$ and $\Omega=\omega$ between two nodes, the two nodes are directly connected iff their distance is not larger than $ r_0\omega^{1/\eta}\exp(\frac{z\ln 10}{10\eta})$. 
\end{IEEEproof}

%The results on the probability of direct connection between two nodes is the key to the analysis of the effective node degree, as can be seen in the next sub-section.

%Further, the following lemma is required in the following subsections.
%\begin{lem}\label{lem_poisson_approximation}
%For a set of $N$ nodes independently, randomly and uniformly placed in a finite region of area $L^2$, the asymptotic network obtained by letting $N\rightarrow\infty$ and $L^2\rightarrow\infty$ while keeping $N/L^2$ constant can be regarded as a network with a homogeneous Poisson point process of intensity $\lambda=N/L^2$.
%\end{lem}

%This asymptotic result has been widely used in this area, e.g. \cite{penrose-random-2003}[Section 1.7], \cite{hall-introduction-1988}[pp. 39], \cite{bettstetter-on-2002}. Further, the rate at which the the results obtained using a homogeneous Poisson point process converges to that of a network with uniform distribution as $L$ increases is studied later by simulation in Section \ref{section_simulation}. 

%%%%%%%%%%%%%%%%%%%%%%%%%%%
%%%%%%%%%%%%%%%%%%%%%%%%%%%
%%%%%%%%%%%%%%%%%%%%%%%%%%%
\subsection{The effective node degree}\label{section_degree}
%A number of parameters, i.e. $\lambda,V,r_0,\beta,\tau_s$ and $\tauup_r$, can affect the number of nodes that are directly connected to an infectious node during its active period, hence affecting the information dissemination process. It has been shown in some preliminary research (e.g. \cite{zhang-on-2011-manet}) that a parameter called the {effective node degree} can capture the impact of different parameters on the information dissemination process of a MANET. The following provides the definition of the {effective node degree} for this paper, which is an extension of that in previous work under the SIR model \cite{zhang-on-2011-manet}. 

\begin{defn}\label{def1}
The \emph{effective node degree} $R_0$ of an infectious node is the expected number of nodes that are directly connected to the infectious node in at least one of the $\beta$ transmissions.
\end{defn}

Note that $R_0$ is the same for all nodes due to the stationarity and homogeneity of node distribution on the torus. 
% Assuming the knowledge of $R_0$, whose computation is studied later in Section \ref{section_degree}, the next sub-section reviews some fundamental properties of the information dissemination process.
To compute the effective node degree, we further need to calculate the clustering factor as defined in the following.

%Given the transmission opportunities, whether or not a node should transmit is another key issue. We propose the following metric to facilitate the choice of sleep time interval $\tau_s$, which determines when a node transmits.

\begin{defn}\label{def_cluster_coefficient}
The \emph{clustering factor} $\phi(\tau_s)$ is the expected number of nodes that are directly connected to an infectious node in \emph{both} of two consecutive transmissions when the sleep time interval is $\tau_s$.
\end{defn}

Using Lemma \ref{lem_r_log}, we have the following results.
\begin{lem}\label{lem_cluster_coefficient}
Consider a network $\mathcal{G}(\eta,\sigma,\lambda,V,r_0,\beta,\tau_s,\tauup_r)$. The clustering factor satisfies
\begin{eqnarray}\label{eq_cluster_coefficient}
&& \hspace{-15px}\phi(\tau_s) \hspace{-1px}
 = \hspace{-5px} \int_{0}^{\infty}\hspace{-4px} \int_{0}^{\infty}\hspace{-4px} \int_{-\infty}^{\infty}\int_{-\infty}^{\infty} \int_0^{\infty} \hspace{-4px} \int_0^{\pi} \hspace{-3px} A_p(\theta,\tau_s,r_N(z_1,\omega_1),r_N(z_2,\omega_2))  \nonumber\\
&& \hspace{-15px}\times \frac{\lambda}{\pi} p_\tau(\tau_r) q(z_1)q(z_2) \zeta(\omega_1)\zeta(\omega_2) d\theta d\tau_r  dz_1dz_2 d\omega_1 d\omega_2,
\end{eqnarray}
where $A_p(\theta,\tau_s, r_1, r_2)$ is given by Eq.~\ref{eq_intersection}.
\end{lem}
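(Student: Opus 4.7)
The plan is to compute $\phi(\tau_s)$ by expressing it as an expected count, then conditioning on all sources of randomness in turn. Fix a tagged infectious node $X$ and consider a generic other node $Y$. By Definition \ref{def_cluster_coefficient}, $\phi(\tau_s)$ is the expectation of the number of nodes directly connected to $X$ at the first transmission time $t_1$ and again at the second transmission time $t_2 = t_1 + \tau_s + \tau_r$, where $\tau_r$ is drawn from the density $p_\tau$. By linearity of expectation and the stationary uniform distribution of nodes on the torus at every time instant (invoked from Section \ref{section_system_model}), it suffices to compute the probability that a single ``typical'' node $Y$ is connected to $X$ at both instants, integrate this probability over the plane, and multiply by the nodal density $\lambda$.

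First I would condition on the two pairs of channel factors $(z_1,\omega_1)$ and $(z_2,\omega_2)$ governing the first and second transmissions, which are independent by the standing assumption that shadowing and fading are independent between distinct transmissions. By Lemma \ref{lem_r_log}, conditioned on these factors, the $i$-th transmission is described by a deterministic disk model with radius $r_N(z_i,\omega_i)$, and so $Y$ is connected at time $t_i$ iff $|Y(t_i)-X(t_i)|\leq r_N(z_i,\omega_i)$.

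Next I would exploit the random direction model. Without loss of generality fix the direction of $X$ along the $x$-axis; then $Y$'s direction is uniform in $[0,2\pi)$, and the relative displacement of $Y$ with respect to $X$ over the time interval $\tau_s+\tau_r$ has magnitude $2V(\tau_s+\tau_r)\sin(\theta/2)$, where $\theta\in[0,2\pi)$ is the angle between the two velocities. By the left-right symmetry of the relative motion, the geometry depends only on $|\theta|\in[0,\pi]$, so the relevant density of $\theta$ is $1/\pi$. Given $\theta$, $\tau_r$, and the two conditional radii, the set of initial offsets $Y(t_1)-X(t_1)$ for which $Y$ is connected at both instants is the intersection of two disks, one of radius $r_N(z_1,\omega_1)$ centred at the origin and one of radius $r_N(z_2,\omega_2)$ centred at the relative displacement vector; its Lebesgue area is precisely $A_p(\theta,\tau_s,r_N(z_1,\omega_1),r_N(z_2,\omega_2))$ as defined in Eq.~\ref{eq_intersection} (with the $\tau_r$ dependence folded into the displacement formula). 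Multiplying by $\lambda$ yields the conditional expected number of nodes connected at both instants, and the assumption $L\gg \beta(\tau_s+\max\{\tau_r\})V$ guarantees that no overlap is lost through torus wrap-around.

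Finally I would average out all the remaining random quantities by integrating against the joint density $\frac{1}{\pi}p_\tau(\tau_r)q(z_1)q(z_2)\zeta(\omega_1)\zeta(\omega_2)$, using mutual independence of the two sets of channel factors, the relative direction, and the random back-off. The main obstacle is the bookkeeping for $A_p$: one must verify that the two-disk intersection area truly decouples into a product structure in $\theta$, $\tau_r$, and the two radii independently of the absolute positions, which is what permits the clean integrand above. Once this geometric decoupling and the independence of the randomising quantities are in hand, the formula in Eq.~\ref{eq_cluster_coefficient} follows immediately.
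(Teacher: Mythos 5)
Your proposal is correct and follows essentially the same route as the paper: condition on the relative direction $\theta$ and the two pairs of channel factors, reduce each transmission to a deterministic disk of radius $r_N(z_i,\omega_i)$ via Lemma~\ref{lem_r_log}, identify the doubly-connected nodes with the intersection of two disks whose centres are separated by $2V(\tau_s+\tau_r)\sin(\theta/2)$, and integrate the area $A_p$ against the product of the densities with the factor $\lambda/\pi$ coming from the $\theta\mapsto|\theta|$ symmetry reduction. The only cosmetic difference is that you work in the relative frame of offsets $Y-X$ while the paper translates the first-transmission disk along with the co-moving subset of nodes; the resulting geometry and integrand are identical.
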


\begin{IEEEproof}
Denote by $\Theta$ the angle measured counterclockwise from the moving direction of an infectious node to the moving direction of an arbitrary node. Recall that the direction of a node is randomly and uniformly chosen in $[0, 2\pi)$, independent of the directions of other nodes. Consequently, it can be shown that the angle $\Theta$ is also uniformly distributed in $[0, 2\pi)$.

Suppose that an infectious node transmits once at point $S_1$, then it moves by distance $(\tau_s+\tau_r)V$ to point $S_2$ and transmits again, as shown in Fig.~\ref{pic_RD_coverage_burst}.

\begin{figure}[ht]
\centering\vspace{-5px}
\includegraphics[scale=0.5]{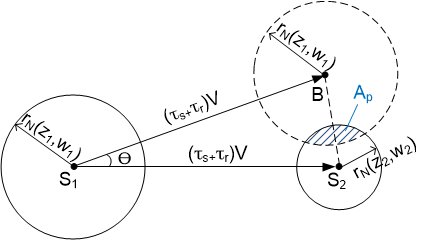}\vspace{-5px}
\caption{An illustration to the nodes (in the shaded area $A_p$) that are directly connected to an infectious node in both two consecutive transmissions (occurring at $S_1$ and $S_2$). Symbols are defined in Lemma~\ref{lem_cluster_coefficient}.}\vspace{-6px}
\label{pic_RD_coverage_burst}
\end{figure}

Next, we focus on a subset of nodes that fulfil the following three conditions: 1) they move in direction $\Theta\in(\theta,\theta+d\theta)$; and 2) their RSS from the infectious node has channel factors $Z_1\in(z_1,z_1+dz_1)$ and $\Omega_1\in(\omega_1,\omega_1+d\omega_1)$ when the infectious node transmits at $S_1$; and 3) their RSS from the infectious node has channel factors $Z_2\in(z_2,z_2+dz_2)$ and $\Omega_2\in(\omega_2,\omega_2+d\omega_2)$ when the infectious node transmits at $S_2$. 
Due to the independence of channel factors and the mobility of nodes, these nodes are uniformly distributed with density $\frac{\lambda}{2\pi} q(z_1)q(z_2)\zeta(\omega_1)\zeta(\omega_2) d\theta dz_1dz_2d\omega_1 d\omega_2$. 
Among this subset of nodes, the nodes that are connected to the infectious node in the first transmission are in the disk centered at point $S_1$ with radius $r_N(z_1,\omega_1)$, denoted by $C(S_1, r_N(z_1,\omega_1))$. Further, when the infectious node transmits at $S_2$, these nodes move by distance $(\tau_s+\tau_r)V$ from being contained in $C(S_1, r_N(z_1,\omega_1))$ to being contained in a new disk $C(B, r_N(z_1,\omega_1))$ \footnote{The directions of nodes can differ by $d\theta$, whose impact on the analysis however becomes vanishingly small when $d\theta\rightarrow 0$ while $(\tau_s+\tau_{r})V$ is finite.} as shown in Fig. \ref{pic_RD_coverage_burst}. Then, the nodes that are connected to the infectious node in both transmissions are in the intersectional area $C(S_2,r_N(z_2,\omega_2))\cap C(B,r_N(z_1,\omega_1))$. Denote by $ A_p(\theta,\tau_s,r_N(z_1,\omega_1),r_N(z_2,\omega_2))$ the size of the intersectional area $C(S_2,r_N(z_2,\omega_2))\cap C(B,r_N(z_1,\omega_1))$. It can be readily calculated using the following formula \cite{mathworld-circle}:
\begin{eqnarray}\label{eq_intersection}
\hspace{-20px}A_p(\theta,\tau_s,r_1,r_2)=
\begin{cases}
\min(\pi r_1^2, \pi r_2^2), ~~\mathrm{for}~\psi(\theta, \tau_s)\leq |r_1-r_2| \\
r_1^2\arccos(\frac{\psi^2(\theta, \tau_s)+r_1^2-r_2^2}{2r_1\psi(\theta, \tau_s)}) +r_2^2\arccos(\frac{\psi^2(\theta, \tau_s)+r_2^2-r_1^2}{2r_2\psi(\theta, \tau_s)}) \cr 
~~~-\frac{1}{2}\hspace{-1px}\sqrt{[(r_1+r_2)^2-\psi^2(\theta, \tau_s)][\psi^2(\theta, \tau_s)-(r_1-r_2)^2]},\\
~~~~\mathrm{  for  } |r_1-r_2|<\psi(\theta, \tau_s)<r_1+r_2 \cr
 0, ~~\mathrm{otherwise},
\end{cases}
\end{eqnarray}
where $\psi(\theta, \tau_s)\hspace{-2px}=\hspace{-2px}2(\tau_s+\tau_r)V\sin\frac{\theta}{2}$ is the length of $BS_2$.

%It is obvious that $|A_p(\theta, r_1, r_2)|$ is a decreasing function of $\tau_s$. To obtained a lower bound on the clustering factor, we let sleep time intervals of all nodes in all transmissions be the same value $\tau_{\max}$ in the rest of this proof.

Next we consider all subsets of nodes. Note that only the cases for $\theta\in[0,\pi)$ need to be calculated due to symmetry. Then the clustering factor satisfies $\phi(\tau_s) =  \lambda E[A_p(\theta,\tau_s,r_N(z_1,\omega_1),r_N(z_2,\omega_2))]$.
%\begin{eqnarray}\label{eq_phi_s_temp}
%\phi(\tau_s) \hspace{-8px}&=& \hspace{-8px} \lambda E[A_p(\theta,\tau_s,r_N(z_1,\omega_1),r_N(z_2,\omega_2))] \\
% \hspace{-8px} &=& \hspace{-8px} \lambda \hspace{-2px} \int_{0}^{\infty}\hspace{-4px} \int_{0}^{\infty}\hspace{-4px}\int_{-\infty}^{\infty}\hspace{-2px} \int_{-\infty}^{\infty} \int_0^{\infty} \hspace{-7px} 2 \hspace{-3px} \int_0^{\pi} \hspace{-6px} A_p(\theta,\tau_s,r_N(z_1,\omega_1),r_N(z_2,\omega_2))  \nonumber\\
%&&\hspace{-20px} \times \frac{1}{2\pi} p_\tau(\tau_r) q(z_1)q(z_2) \zeta(\omega_1)\zeta(\omega_2) d\theta d\tau_r dz_1dz_2d\omega_1 d\omega_2 .\nonumber
%\end{eqnarray}
\end{IEEEproof}

\begin{rem}
The results of Lemma \ref{lem_cluster_coefficient} can be extended to different channel models by substituting the pdfs of the channel factors $z_1, z_2, \omega_1$ and $\omega_2$.
\end{rem}

%\begin{rem}
%Under UDM, a closed-form lower bound on $\phi(\tau_s)$ can be obtained, which is provided in Appendix \ref{appendix_correlation}.
%\end{rem}

Finally, we have the following theorem for the value of $R_0$.

\begin{thm}\label{thm_R0}
Consider the network $\mathcal{G}(\eta,\sigma,\lambda,V,r_0, \beta,\tau_s,\tauup_r)$. Under the Log-normal-Nakagami model, the effective node degree satisfies
\begin{equation}\label{eq_R0}
R_0\leq\beta \lambda\pi r_0^2 \exp\left(\frac{(\sigma\ln 10)^2}{50\eta^2}\right) \frac{m^{\frac{-2}{\eta}} \Gamma(m+\frac{2}{\eta})}{\Gamma(m)}- (\beta-1)\phi(\tau_s),
\end{equation}
where $\Gamma(.)$ is the standard Gamma function and $\phi(.)$ is given by Lemma \ref{lem_cluster_coefficient}.
\end{thm}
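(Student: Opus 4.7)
The plan is to recast $R_0$ as the expected size of the union $\bigcup_{k=1}^{\beta} N_k$, where $N_k$ denotes the (random) set of nodes directly connected to the tagged infectious node during its $k$-th transmission, and then to bound this union by tracking only overlaps between \emph{adjacent} transmissions. First I would use the telescoping identity
\[
\Bigl|\bigcup_{k=1}^{\beta} N_k\Bigr|\;=\;|N_1|\;+\;\sum_{k=2}^{\beta}\Bigl|N_k\setminus\bigcup_{i<k} N_i\Bigr|,
\]
and observe that $N_{k-1}\subseteq\bigcup_{i<k}N_i$ forces $\bigl|N_k\setminus\bigcup_{i<k}N_i\bigr|\le|N_k\setminus N_{k-1}|=|N_k|-|N_k\cap N_{k-1}|$. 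Taking expectations and invoking Definition~\ref{def_cluster_coefficient} converts the right-hand side into $\mu-\phi(\tau_s)$ for each $k\ge 2$, where $\mu\triangleq\mathbb{E}[|N_k|]$ is the expected number of nodes connected to the infectious node in a single transmission (independent of $k$ by the stationarity of the uniform distribution under RDM). Summing yields the clean bound $R_0\le\beta\mu-(\beta-1)\phi(\tau_s)$.

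The remaining task is to evaluate $\mu$ under the Log-normal-Nakagami model. By Lemma~\ref{lem_r_log}, conditional on channel factors $Z=z,\Omega=\omega$, a receiver is connected to the transmitter iff it lies inside a disk of radius $r_N(z,\omega)=r_0\omega^{1/\eta}\exp(z\ln 10/(10\eta))$. Since nodes are uniformly distributed with density $\lambda$ independently of the channel factors, a standard Fubini/averaging argument gives
\[
\mu\;=\;\lambda\pi\,\mathbb{E}\bigl[r_N(Z,\Omega)^2\bigr]\;=\;\lambda\pi r_0^2\,\mathbb{E}\bigl[\Omega^{2/\eta}\bigr]\,\mathbb{E}\!\left[\exp\!\left(\tfrac{Z\ln 10}{5\eta}\right)\right],
\]
where the factorisation uses independence of $Z$ and $\Omega$ together with $r_N^2=r_0^2\,\Omega^{2/\eta}\exp(Z\ln 10/(5\eta))$.

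For the Gamma expectation, a change of variable $u=m\omega$ in the pdf (Eq.~\ref{eq_pdf_w0}) gives $\mathbb{E}[\Omega^{2/\eta}]=m^{-2/\eta}\Gamma(m+2/\eta)/\Gamma(m)$. For the log-normal expectation, the Gaussian moment-generating function evaluated at $(\ln 10)/(5\eta)$ yields $\exp((\sigma\ln 10)^2/(50\eta^2))$. Plugging both into $R_0\le\beta\mu-(\beta-1)\phi(\tau_s)$ recovers Eq.~\ref{eq_R0}.

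The step I expect to require the most care is the opening inequality: one must be comfortable dropping all non-consecutive overlap terms from an inclusion-exclusion-style expansion of the union. This is legitimate because the dropped terms account precisely for the overcounting that remains after subtracting only consecutive pairwise intersections, so $\beta\mu-(\beta-1)\phi(\tau_s)$ is genuinely an upper bound rather than an equality --- the gap between $R_0$ and the right-hand side is exactly the expected multiplicity of nodes that lie in $N_k$ and in some $N_i$ with $i<k-1$. The subsequent evaluations of $\mathbb{E}[\Omega^{2/\eta}]$ and $\mathbb{E}[e^{Z\ln 10/(5\eta)}]$ are routine applications of the Gamma moment identity and the Gaussian MGF.
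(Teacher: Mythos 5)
Your proof is correct and follows the same overall decomposition as the paper: $\beta$ copies of the single-transmission coverage minus $(\beta-1)$ consecutive-pair overlaps, with the single-transmission term $\mu=\lambda\pi r_0^2\,\mathbb{E}[\Omega^{2/\eta}]\,\mathbb{E}[e^{Z\ln 10/(5\eta)}]$ evaluated exactly as in the paper's Eq.~\ref{eq_R0_3} via the Gamma moment identity and the Gaussian moment-generating function. Where you genuinely differ is in how the inequality for $\beta\ge 3$ is justified. The paper computes the $\beta=2$ case as an exact union area and then extends to general $\beta$ by a verbal appeal to ``the inclusion-exclusion principle,'' saying only that it declines to compute intersections of three or more disks; strictly speaking, truncating inclusion-exclusion after all pairwise terms yields a \emph{lower} bound (Bonferroni), and subtracting only the \emph{consecutive} pairwise terms is not a truncation of inclusion-exclusion at all, so the paper's stated justification is loose. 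Your telescoping identity $\bigl|\bigcup_k N_k\bigr|=|N_1|+\sum_{k\ge 2}\bigl|N_k\setminus\bigcup_{i<k}N_i\bigr|$ combined with $N_{k-1}\subseteq\bigcup_{i<k}N_i$ is exactly the right way to make the upper bound rigorous, and is the main thing your write-up adds over the paper's argument. One small imprecision in a side remark: the slack in the bound is not quite ``nodes lying in $N_k$ and in some $N_i$ with $i<k-1$'' but rather $\sum_{k}\mathbb{E}\bigl[\bigl|\bigl(N_k\cap\bigcup_{i\le k-2}N_i\bigr)\setminus N_{k-1}\bigr|\bigr]$, i.e. nodes reached at step $k$ that were already reached at some step $i\le k-2$ but \emph{not} at step $k-1$; this does not affect the validity of the inequality.
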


\begin{IEEEproof}
We first consider two consecutive transmissions, as illustrated in Fig. \ref{pic_RD_coverage_burst}. Along the same lines as Lemma \ref{lem_cluster_coefficient}, consider that the channel factors of the $i^{th}$ transmission (for $i=\{1,2,...,\beta\}$) is $Z_i\in[z_i,z_i+dz_i]$ and $\Omega_i\in[\omega_i,\omega_i+d\omega_i]$. 
Define $\phi_{1,2}(\tau_s)$ to be the clustering factor of the two consecutive transmissions. It is straightforward that $\phi_{1,2}(\tau_s)$ is given by the Eq. \ref{eq_cluster_coefficient} without integrating over $z_1, z_2, \omega_1$ and $\omega_2$.
%\\

It is straightforward that the size of the area covered by the radio range in the $i^{th}$ transmission is $\pi(r_N(z_i,\omega_i))^2$. Further, the size of the area wherein the nodes receive both the $1^{st}$ and the $2^{nd}$ transmissions is $\phi_{1,2}(\tau_s)/\lambda$. Denote by $A_R$ the total (union) size of the area wherein the nodes receive either one of the $1^{st}$ or the $2^{nd}$ transmissions. It is straightforward that $A_R=\pi(r_N(z_1,\omega_1) )^2 \hspace{-2px}+\hspace{-2px} \pi(r_N(z_2,\omega_2) )^2 - \phi_{1,2}(\tau_s)/\lambda$.

Then the expected number of nodes that are directly connected to an infectious node in either of the two transmissions can be calculated using Eq.~\ref{eq_pdf_Z} and Lemma \ref{lem_r_log}: 
\begin{eqnarray}\label{eq_R0_3}
&&\hspace{-8px}\int_{0}^{\infty}\hspace{-6px} \int_{0}^{\infty}\hspace{-5px} \int_{-\infty}^{\infty}\hspace{-1px}  \int_{-\infty}^{\infty}\hspace{-5px} \Big(\lambda\pi(r_N(z_1,\omega_1) )^2 \hspace{-2px}+\hspace{-2px} \lambda\pi(r_N(z_2,\omega_2) )^2   
 \hspace{-4px}  - \hspace{-1px} \phi_{1,2}(\tau_s) \hspace{-1px}\Big) q(z_1)q(z_2)\zeta(\omega_1)\zeta(\omega_2) dz_1dz_2d\omega_1 d\omega_2  \nonumber\\
%%%%%%%%%%%%%%%%%%%%%%%%%%%%
%&=&\hspace{-7px}2\int_{0}^{\infty} \hspace{-4px}\int_{-\infty}^{\infty} \hspace{-5px}\lambda\pi(r_N(z,\omega) )^2 q(z) \zeta(\omega) dz d\omega 
%-\hspace{-3px}\int_{0}^{\infty} \hspace{-3px}\int_{0}^{\infty} \hspace{-3px} \int_{-\infty}^{\infty} \hspace{-2px} \int_{-\infty}^{\infty} \nonumber\\ 
%&&\phi_{1,2}(\tau_s) q(z_1)q(z_2)\zeta(\omega_1)\zeta(\omega_2)dz_1dz_2d\omega_1 d\omega_2 \nonumber \\
%%%%%%%%%%%%%%%%%%%%%%%%%%%%
%&=&2\int_{-\infty}^{\infty}\lambda\pi \left( r_0\exp(\frac{z\ln 10}{10\eta}) \right)^2  q(z)dz -\phi(\tau_s)\\
%%%%%%%%%%%%%%%%%%%%%%%%%%%%
&=&\hspace{-7px}2\int_{0}^{\infty} \hspace{-2px}\int_{-\infty}^{\infty} \hspace{-4px}\lambda\pi \left( r_0\omega^{1/\eta}\exp(\frac{z\ln 10}{10\eta}) \right)^2  \frac{1}{\sigma \sqrt{2\pi}}\exp\left({-\frac{z^2}{2\sigma ^2}}\right)  dz  
\zeta(\omega) d\omega -\phi(\tau_s) \\
%&=&\frac{2\lambda\pi r_0^2}{\sigma \sqrt{2\pi}} \int_{-\infty}^{\infty}  \exp(\frac{z\ln 10}{5\eta}-\frac{z^2}{2\sigma ^2}) dz -\phi(\tau_s)\\
%%%%%%%%%%%%%%%%%%%%%%%%%%%%
&=&\hspace{-7px}2\int_{0}^{\infty} \hspace{-3px} \lambda\pi r_0^2 \omega^{2/\eta} \exp\left(\frac{(\sigma\ln 10)^2}{50\eta^2}\right) \zeta(\omega) d\omega  -\phi(\tau_s) 
%%%%%%%%%%%%%%%%%%%%%%%%%%%%
=\hspace{-3px}2 \lambda\pi r_0^2 \exp\left(\frac{(\sigma\ln 10)^2}{50\eta^2}\right) \frac{m^{\frac{-2}{\eta}} \Gamma(m+\frac{2}{\eta})}{\Gamma(m)}  -\phi(\tau_s) \nonumber .
\end{eqnarray}

%Note that if the number of transmissions $\beta=1$, there holds
%\begin{equation}\label{eq_R0_U}
%R_0=\beta \lambda\pi r_0^2 \exp\left(\frac{(\sigma\ln 10)^2}{50\eta^2}\right) \frac{m^{\frac{-2}{\eta}} \Gamma(m+\frac{2}{\eta})}{\Gamma(m)} - (\beta-1)\phi(\tau_s),
%\end{equation}
%Further, if the number of transmissions $\beta=2$, there holds
%\begin{eqnarray}\label{eq_R0_L}
%R_0&=&\beta \lambda\pi r_0^2 \exp\left(\frac{(\sigma\ln 10)^2}{50\eta^2}\right) \frac{m^{\frac{-2}{\eta}} \Gamma(m+\frac{2}{\eta})}{\Gamma(m)} - (\beta-1)\phi(\tau_s)\nonumber\\
%&&+ (\beta-2)^+\phi(2\tau_s),
%\end{eqnarray}

\begin{figure}[ht]
\centering\vspace{-5px}
\includegraphics[scale=0.5]{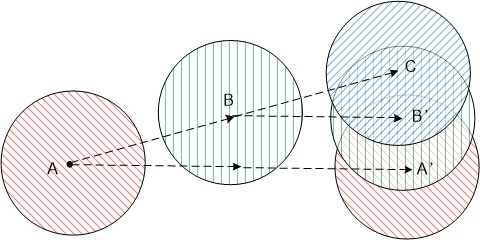}\vspace{-5px}
\caption{An illustration of the area covered by the radio range of an infectious node during three transmissions at points $A$, $B$, and $C$ sequentially. Consider a set of nodes moving in the direction $AA'$. When the infectious node transmits the third time (at point $C$), the nodes that have received the first and second transmissions are in the disks centred at points $A'$ and $B'$ respectively.}\vspace{-6px}
\label{pic_RD_coverage2_burst}
\end{figure}

Using the inclusion-exclusion principle, Eq. \ref{eq_R0} can be readily obtained. More specifically, if the number of transmissions is $\beta=1$, then $R_0$ is equal to the first term in Eq. \ref{eq_R0}. If the number of transmissions is $\beta=2$, then $R_0$ is equal to the two terms in Eq. \ref{eq_R0}. 
Further, when $\beta\geq 3$, the analysis involves the intersectional area of more than two circles, as shown in Fig. \ref{pic_RD_coverage2_burst}. To avoid complicated formulas, we only calculate the intersectional area of two circles (viz. the term $\phi(\tau_s)$) and provide an upper bound on $R_0$ as shown in Eq. \ref{eq_R0}. 
\end{IEEEproof}

%Using the results of the effective node degree, the following subsections investigate the performance of the information dissemination using the proposed broadcast scheme.

%%%%%%%%%%%%%%%%%%%%%%%%%%%%%%
%%%%%%%%%%%%%%%%%%%%%%%%%%%%%%
%%%%%%%%%%%%%%%%%%%%%%%%%%%%%%%%%%%%%%%
%%%%%%%%%%%%%%%%%%%%%%%%%%%%%%%%%%%%%%%
\subsection{Percolation probability}\label{section_percolation}
We first study the fraction of informed nodes from a percolation perspective \cite{franceschetti-random-2007} \emph{asymptotically}, viz. we increase the network area towards infinity (i.e. let $L\rightarrow\infty$) while keeping other parameters (i.e. $\lambda,V,r_0,\beta,\tau_s$ and $\tauup_r$) unchanged. 
%
%With a slight twist of terminology,  we define the percolation probability for a MANET as follows:

\textcolor{blue}{Consider a realization of a network on a torus
$(0,L]^{2}$. Denote by a constant $N(L)$ the total number of nodes
in this network. Further denote by a random integer $N_{R}(L)$ the
number of informed nodes of a packet (i.e. the nodes that have received
the packet at the end of the information dissemination process) in
this network. Then it is straightforward that the fraction of informed
nodes of the packet in this network can be calculated by $N_{R}(L)/N(L)$.
Note that the fraction of informed nodes is calculated for each information
dissemination process in each realization of a random network.}

\textcolor{blue}{Note that in an infinite network, the network
is said to \emph{percolate} if there exists a component of infinite size
in the network \cite{gilbert-random-1961,dousse-impact-2005,franceschetti-random-2007},
where a \emph{component} is a maximal set of nodes in the network such that
there is an end-to-end path between every pair of nodes in the set.
Percolation probability is the probability that the network percolates.
As widely used in the study of percolation in random networks
\cite{franceschetti-random-2007},
we consider asymptotic networks in this paper. More specifically,
we consider asymptotic networks where $L\rightarrow\infty$ while keeping
the nodal density unchanged. It is straightforward that the total
number of nodes in each realization of the network becomes $N(L)\rightarrow\infty$. }

\textcolor{blue}{Further note that, a series $a_{n}$ depending
on $n$ is said to be \emph{non-vanishingly-small}
if there exists a sufficiently-small positive constant $\varepsilon$
and a positive integer $n_{0}$ such that for all $n>n_{0}$, $a_{n}>\varepsilon$. In this paper, we are interested in the probability that the random
number $N_{R}(L)/N(L)$ is non-vanishingly-small as $N(L)$ approaches
infinity, i.e. the probability that the fraction of informed nodes
is non-vanishingly-small. With a bit twist of its standard definition,
we call this probability the \emph{percolation probability} in our paper. }

\begin{defn}\label{defn_percolation}
The \emph{percolation probability} $p_c$ of a MANET is the probability that a piece of information broadcast from an arbitrary node can be received by a \emph{non-vanishingly-small} fraction of nodes asymptotically.
\end{defn}

The main result of this \textcolor{blue}{subsection} is as follows:

\begin{thm}\label{thm_percolation}
Consider a network $\mathcal{G}(\eta,\sigma,\lambda,V,r_0,\beta,\tau_s,\tauup_r)$ with effective node degree $R_0$. The percolation probability $p_c$ satisfies $p_c\leq 1+\frac{1}{R_0}W(-R_0e^{-R_0})$, where $W(.)$ is the Lambert W Function.
\end{thm}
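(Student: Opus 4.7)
The plan is to model the information dissemination as a branching process rooted at the source, and then to identify \emph{percolation} (a non-vanishingly-small fraction of informed nodes asymptotically) with \emph{survival} of that branching process. In generation $0$ we place the source; in generation $k+1$ we place all nodes that first receive the packet from a generation-$k$ infectious node. By Definition~\ref{def1}, each infectious node directly reaches $R_0$ others in expectation across its $\beta$ transmissions. Because the initial spatial distribution is uniform on the torus and (after rescaling) converges to a homogeneous Poisson point process in the $L\to\infty$ limit, the number of recipients of a single infectious node is approximately Poisson with mean $R_0$. I would therefore couple the true dissemination process with a Galton--Watson branching process whose offspring distribution is $\mathrm{Poisson}(R_0)$.

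The coupling should be made so that the branching process \emph{dominates} the true process: whenever the spatial process would re-visit a node that is already informed (or recovered), the branching process still counts a fresh offspring. Under this coupling, if the branching process goes extinct then the set of informed nodes is finite, so the fraction of informed nodes vanishes as $L\to\infty$; equivalently, $\{\text{percolation}\}\subseteq\{\text{branching process survives}\}$. Taking probabilities,
\begin{equation*}
p_c \;\le\; p_{\mathrm{surv}}(R_0),
\end{equation*}
where $p_{\mathrm{surv}}(R_0)$ is the survival probability of a $\mathrm{Poisson}(R_0)$ Galton--Watson process.

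Next I would derive the closed form for $p_{\mathrm{surv}}(R_0)$ from the classical fixed-point equation for the extinction probability $q=1-p_{\mathrm{surv}}$. For a $\mathrm{Poisson}(R_0)$ offspring law, the probability generating function is $G(s)=e^{-R_0(1-s)}$, so $q=G(q)$ gives $1-p_{\mathrm{surv}}=e^{-R_0 p_{\mathrm{surv}}}$. Multiplying both sides by $-R_0$ and then by $e^{-R_0(1-p_{\mathrm{surv}})}$, one obtains
\begin{equation*}
-R_0(1-p_{\mathrm{surv}})\,e^{-R_0(1-p_{\mathrm{surv}})} \;=\; -R_0\,e^{-R_0},
\end{equation*}
so by the definition of the Lambert $W$ function, $-R_0(1-p_{\mathrm{surv}})=W(-R_0 e^{-R_0})$, and solving for $p_{\mathrm{surv}}$ yields $p_{\mathrm{surv}}=1+\tfrac{1}{R_0}W(-R_0 e^{-R_0})$, which is the desired bound.

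The main obstacle is the coupling step: the spatial dissemination is not independent across generations because the geographic disks of subsequent infected nodes overlap, nodes may be re-visited, and Theorem~\ref{thm_R0} itself only gives an \emph{upper bound} on $R_0$ via the clustering factor $\phi(\tau_s)$. I would handle this by (i) conditioning on the channel factors and using Lemma~\ref{lem_rcm} so that each transmission reduces to a disk of deterministic radius, (ii) invoking the Poisson limit for uniform points on $(0,L]^2$ as $L\to\infty$ at fixed density, and (iii) replacing each actually-reached set by an independent Poisson population of mean $R_0$, which only increases the chance of survival and hence preserves the inequality direction. Crucially, since these overlaps and revisits remove potential offspring, the true reproduction is stochastically dominated by $\mathrm{Poisson}(R_0)$, which is precisely what makes the bound in the theorem an \emph{upper} bound rather than an equality.
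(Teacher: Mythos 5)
Your proposal is correct and follows essentially the same route as the paper: both model the dissemination as a Galton--Watson branching process with mean offspring $R_0$ (Poisson offspring law, yielding the fixed point $q=e^{-R_0(1-q)}$ and hence the Lambert~$W$ expression), argue that revisits of already-informed nodes make the true process stochastically dominated by the branching process, and conclude that percolation implies survival so that $p_c\le p_{\mathrm{surv}}(R_0)$. Your explicit derivation of the Poisson generating-function fixed point is in fact slightly cleaner than the paper's, whose displayed equation $q=\exp((1-q)R_0)$ carries a sign typo even though its stated solution is correct.
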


\begin{IEEEproof}
We first model the information dissemination process using a Galton-Watson branching process \cite{jagers-branching-1975}. The root (viz. the $0^{th}$ generation) is the source node. The expected number of children per node is given by $R_0$. Denote by $\chi(k)$ the number of individuals in the $k^{th}$ generation of the branching process.

Define $q\triangleq \Pr(\lim_{k\rightarrow\infty}\chi(k)\rightarrow 0)$ to be the \emph{extinction probability} of the branching process, viz. the probability that the number of individuals in the $k^{th}$ generation diminishes to zero as $k\rightarrow\infty$. It has been shown in \cite[Theorem 6.5.1]{jagers-branching-1975} and \cite{zhang-on-2011-manet} that the extinction probability $q$ is the smallest non-zero solution of 
$q=\exp((1-q)R_0)$ as the number of nodes $N\rightarrow\infty$. Solving the equation, it can be obtained that $q=\frac{W(-R_0 e^{-R_0})}{-R_0}$, where $W(.)$ is the Lambert W Function~\cite{corless-on-1996}.

%Given the effective node degree $R_0$ and the extinction probability $q$, it is straightforward to obtain the percolation probability, which is a connectivity property similar to the probability of disease outbreak in epidemiology \cite{britton-stochastic-2010}.

%It is straightforward according to the definition of percolation that an information does \emph{not} percolate iff there are \emph{only finite} number of informed nodes, under the extended network model. It follows that an information does \emph{not} percolate iff the fraction of informed nodes goes to zero under the extended network model.
%%

Next we establish the connection between the branching process and the information dissemination process in a MANET. 
Denote by $\chi_N(k)$ the number of nodes in the $k^{th}$ generation of the information dissemination process, where the $k^{th}$ generation of the information dissemination process consists of the nodes that receive the information for the first time from a node belonging to the $(k-1)^{th}$ generation, and the $0^{th}$ generation is the source node. 
Similarly as above, define $q_N\triangleq\Pr(\lim_{k\rightarrow\infty} \chi_N(k) \rightarrow 0)$ to be the extinction probability of the information dissemination process. Because some of the $R_0$ nodes that are connected to an infectious node may have already received the information from other infectious nodes, the number of children per node for the information dissemination process is stochastically less than that in the branching process introduced earlier. Therefore, there holds  $\chi_N(k)\preceq\chi(k)$ \footnote{Using stochastic ordering, we say $\chi_N(k)\preceq\chi(k)$ iff $\Pr(\chi_N(k)>c)\leq\Pr(\chi(k)>c)$ for any constant $c$.}. Then there holds  
$
q_N\geq q=\frac{W(-R_0 e^{-R_0})}{-R_0}
$.

Denote by $p'_c$ the probability that the information dissemination process does not go extinct. It is clear that 
$
p'_c=1-q_N
$. 
Further, it is obvious that the information dissemination process does not \textcolor{blue}{become} extinct is a necessary but not sufficient condition for having an non-vanishingly small fraction of informed nodes. Therefore, there holds $p_c\leq p_c'$. 
%Further, the event of having an unbounded number of informed nodes in the limit of large network size (whose probability is $p'_c$) is a necessary but not sufficient condition for having a non-vanishingly-small fraction of nodes in the limit of large network size. Equivalently, $p_c\leq p_c'$. (To see this, suppose for example that the number of informed nodes grows as $\sqrt{N}$ where the number of nodes in the network is $N\rightarrow\infty$. Though the number of informed nodes is unbounded, the fraction of informed nodes ($\sqrt{N}/N$) goes to 0 as $N\rightarrow\infty$, viz. the percolation does not occur.)
\end{IEEEproof}

%%%%%%%%%%%%%%%%%%%%%%%%%%%%%%
%%%%%%%%%%%%%%%%%%%%%%%%%%%%%%
%%%%%%%%%%%%%%%%%%%%%%%%%%%%%%%%%%%%%%%
%%%%%%%%%%%%%%%%%%%%%%%%%%%%%%%%%%%%%%%
\subsection{Expected fraction of informed nodes}\label{section_fraction}
%Intuitively, if a large network ($L\rightarrow\infty$) percolates, it implies that the information broadcast from a source node can reach a significant portion of nodes in the network. Further, 
Define $z_0$ as the expected fraction of informed nodes in the steady state of a percolated network. Then we report the following two results:

\begin{thm}\label{thm_size}
Consider a network ($L\rightarrow\infty$), whose effective node degree is $R_0$. The expected fraction of informed nodes in the steady state of a percolated network satisfies $z_0\leq 1+\frac{1}{R_0}W(-R_0e^{-R_0})$, where $W(.)$ is the Lambert W Function.
\end{thm}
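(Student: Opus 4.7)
The plan is to reuse the Galton--Watson branching process framework established in the proof of Theorem \ref{thm_percolation}, together with a symmetry/stationarity argument that connects the fraction of informed nodes to the non-extinction probability of the dominating branching process. The bound $1+\frac{1}{R_0}W(-R_0 e^{-R_0})$ is exactly $1-q$, where $q$ is the extinction probability computed previously, so the target inequality is really $z_0\leq 1-q$.

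First I would tag an arbitrary non-source node $v$ and let $I_v\in\{0,1\}$ be the indicator that $v$ eventually becomes informed. By the stationarity and spatial homogeneity of the random direction mobility model on the torus (as invoked in Section \ref{section_system_model}), $\Pr(I_v=1)$ is the same for every $v$, so
\begin{equation*}
\mathbb{E}\!\left[\frac{1}{N}\sum_{v} I_v\right]=\Pr(I_v=1).
\end{equation*}
Taking the large-$L$ limit, the left-hand side is $z_0$ times the percolation probability, so it suffices to upper bound $\Pr(I_v=1)$ by $1-q$.

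Second, I would observe that the event $\{I_v=1\}$ requires the existence of a chain $\text{source}=u_0,u_1,\dots,u_k=v$ of successful transmissions, i.e., $v$ must lie in the infection cluster of the source. Under the same coupling used in Theorem \ref{thm_percolation}, this cluster is dominated (node by node and generation by generation) by the total progeny of the Galton--Watson process $\chi(k)$ with offspring mean $R_0$, because at each generation some of the $R_0$ neighbours may already be informed. Hence the infection cluster is finite whenever the dominating branching process goes extinct, giving
\begin{equation*}
\Pr(I_v=1)\leq \Pr(\chi \text{ survives})=1-q=1+\frac{1}{R_0}W(-R_0 e^{-R_0}),
\end{equation*}
and the theorem follows after dividing through by the percolation probability and noting that $z_0$ is itself bounded above by $1-q$ (the conditional expectation cannot exceed this, since being informed already requires the source cluster to be unbounded, an event with probability at most $1-q$).

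The main obstacle is making rigorous the step that identifies the asymptotic fraction of informed nodes with the non-extinction probability of the associated branching process. In particular one has to rule out the possibility that the epidemic survives but reaches only a vanishing fraction of the network, and simultaneously show that on the percolation event $z_0$ is attained uniformly. This is handled in the standard configuration-model / sparse random graph literature by the classical ``attack rate equals survival probability'' result, and I would invoke it here after verifying that the stochastic domination $\chi_N(k)\preceq\chi(k)$ continues to hold for the final cluster size, not just generation by generation.
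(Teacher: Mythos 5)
Your overall strategy (branching-process domination plus the ``attack rate equals survival probability'' duality) is a legitimate alternative to what the paper actually does --- the paper omits the proof and appeals to a mean-field ODE final-size argument from \cite{isham-stochastic-2004,zhang-on-2011-manet}, for which $1+\frac{1}{R_0}W(-R_0e^{-R_0})$ is the root of the final-size equation $z=1-e^{-R_0 z}$. But the chain of inequalities you write down does not close. You correctly obtain $\Pr(I_v=1)=\mathbb{E}[\text{fraction of informed nodes}]\to z_0\,p_c$ and, by dominating the source's \emph{forward} cluster, $\Pr(I_v=1)\le 1-q$. Dividing by the percolation probability then yields only $z_0\le (1-q)/p_c$, and since the paper establishes merely the \emph{upper} bound $p_c\le 1-q$ (Theorem \ref{thm_percolation}), the right-hand side is at least $1$: the bound is vacuous. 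The parenthetical rescue --- that the conditional expectation cannot exceed $1-q$ because being informed requires an event of probability at most $1-q$ --- is not valid reasoning: a bound on the unconditional probability of an event says nothing about the conditional \emph{fraction} of informed nodes given that event, which could a priori be close to $1$.

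The missing idea is the \emph{dual} (backward) branching process for the tagged node $v$: the set of nodes from which a chain of successful transmissions could reach $v$ (its susceptibility set) must itself be dominated by a Galton--Watson process with offspring mean $R_0$, so that $v$ can be informed in a large outbreak only if this dual process survives, an event of probability at most $1-q$ and asymptotically decoupled from the survival of the source's forward cluster. That yields $\Pr(I_v=1\mid\text{percolation})\le 1-q$ directly and hence $z_0\le 1-q$. You cite the right external result in your closing paragraph, but you neither construct this dual process nor verify that the domination $\chi_N(k)\preceq\chi(k)$, which the paper proves only for the forward process rooted at the source, transfers to the backward process rooted at $v$ (where symmetry of the links and reversibility of the mobility model would need to be invoked). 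As written, the steps you actually carry out do not imply the theorem.
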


Theorem \ref{thm_size} can be readily proved using a set of ordinary differential equations and a mean field limit theorem that is commonly used in the analysis of epidemic broadcast schemes, c.f. \cite{isham-stochastic-2004, zhang-on-2011-manet} and the reference herein. The proof is therefore omitted here and the accuracy of the result is further verified by simulation in Section \ref{section_simulation}.

Theorems \ref{thm_percolation} and \ref{thm_size} show that the effective node degree determines the performance of broadcast scheme. We next present further analysis quantifying the value of the effective node degree.

%%%%%%%%%%%%%%%%%%%%%%%%
\subsection{Energy and bandwidth efficiency}\label{section_efficiency}
We next introduce the energy and bandwidth
consumption metrics. Specifically, assume that the time spent on transmitting
a packet of unit size over a single hop is a constant $T_{t}$.
Therefore the energy consumed in transmitting a packet is $T_{t}p_{t}$,
denoted via a single constant $E_{1}=T_{t}p_{t}$. Similarly, denote by constant
$E_{2}$ the energy consumed when receiving a packet at a single node. 

%It is clear that the consumption of energy and bandwidth is an increasing function of the number of transmissions. This section relates the energy and bandwidth consumption with the number of transmissions using a simple model. Specifically, consider that the energy consumption of each transmission, including the energy required to process and transmit a single packet, is given by a constant $E_c$, whose value depends on practical implementations and the values reflecting typical technologies can be found in \cite{carvalho-modeling-2004}. 

Denote by random variable $\mathcal{D}_{i}$
the node degree of a randomly chosen node at a random time instant.
Then the sum of the energy consumption for a randomly chosen node
broadcasting a packet and the energy consumed by all its neighbours in
receiving the packet is evidently $E_{1}+\mathcal{D}_{i}E_{2}$. It then follows that the expected energy consumption for a randomly
chosen node broadcasting a packet and all its neighbours receiving
the packet is $E_{1}+\mathbb{E}[\mathcal{D}_{i}]E_{2}$, where $\mathbb{E}[\mathcal{D}_{i}]$
is the average node degree. 

Therefore, we can combine the energy consumption
at a single node, including energy consumed by other nodes in receiving
its packet, by a constant $E_{c}=E_{1}+\mathbb{E}[\mathcal{D}_{i}]E_{2}$. As
manifested in the above equation, the overall energy consumption is
directly related to the number of transmissions and equals to the
number of transmissions times $E_{c}$.

Similarly, if each transmission occupies
(e.g. using CSMA) the frequency band in an area whose expected size
is $A_{c}$, then the expected size of the area where the frequency
band is occupied by an infectious node during its $\beta$ transmissions
is $\beta A_{c}$. Therefore, the consumption of
bandwidth is also an increasing function of the number of transmissions. Therefore, to save energy and bandwidth, we need to reduce the number of transmissions $\beta$.

On the other hand, to meet a pre-designated broadcast performance objective, measured by the percolation probability and expected fraction of informed nodes, a certain number of transmissions are required. Given the dependence of the performance objective on the effective node degree (which is determined by the number of transmissions $\beta$) and the reliance of the energy and bandwidth consumption on the number of transmissions $\beta$, we propose using the following ratio to measure the \emph{energy and bandwidth efficiency} of the proposed broadcast scheme:
\begin{equation}\label{eq_efficiency}
Y\triangleq\frac{R_0}{\beta},
\end{equation} 
which is the average effective node degree achieved per transmission.

To improve energy and bandwidth efficiency, it is obvious that the clustering factor, which characterises the amount of overlap between two transmissions, needs to be reduced. The following lemma reports a useful property of the clustering factor $\phi(\tau_s)$.

\begin{lem}\label{lem_cluster_coefficient_mono}
The clustering factor $\phi(\tau_s)$ is a monotone non-increasing function of $\tau_s$, when all other parameters (i.e. $\eta,\sigma,\lambda,V,r_0,\beta$ and $\tauup_r$,) are fixed.
\end{lem}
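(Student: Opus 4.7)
The plan is to reduce the monotonicity of $\phi(\tau_s)$ to a pointwise monotonicity, in $\tau_s$, of the integrand $A_p$ appearing in Eq.~\ref{eq_cluster_coefficient}. Every other factor in that integrand---the prefactor $\lambda/\pi$, the densities $p_\tau(\tau_r)$, $q(z_i)$, $\zeta(\omega_i)$, and the radii $r_N(z_i,\omega_i)$---is non-negative and does not depend on $\tau_s$ once the remaining parameters $\eta,\sigma,\lambda,V,r_0,\beta$ and the distribution of $\tauup_r$ are held fixed. Therefore, if I can show that $A_p(\theta,\tau_s,r_1,r_2)$ is non-increasing in $\tau_s$ for every fixed $(\theta,\tau_r,r_1,r_2)$, linearity and positivity of the integral will deliver the lemma at once.

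To obtain the pointwise monotonicity, I would trace how $\tau_s$ enters $A_p$. From the derivation of Lemma~\ref{lem_cluster_coefficient}, the sleep time interval affects $A_p$ only through the centre-to-centre distance $\psi(\theta,\tau_s)=2(\tau_s+\tau_r)V\sin(\theta/2)$ of the two disks whose intersection gives $A_p$. Since $\theta\in[0,\pi]$ implies $\sin(\theta/2)\geq 0$, the quantity $\psi$ is non-decreasing in $\tau_s$ for every admissible $\theta$ and $\tau_r$. The lemma then reduces to the classical geometric statement that the area of intersection of two disks of fixed radii is a non-increasing function of the distance between their centres.

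This geometric fact is the only step requiring a real calculation, and I would verify it directly from Eq.~\ref{eq_intersection}. For $\psi\leq|r_1-r_2|$ the formula is the constant $\min(\pi r_1^2,\pi r_2^2)$, and for $\psi\geq r_1+r_2$ it is identically zero; both values match the middle-piece value at the matching point by a short substitution (e.g., at $\psi=|r_1-r_2|$ one of the $\arccos$ terms equals $\pi$ while the square-root term vanishes). On the open interval $|r_1-r_2|<\psi<r_1+r_2$ I would differentiate the middle piece with respect to $\psi$; after simplification the derivative collapses to $-\ell(\psi)$, where $\ell(\psi)=\frac{1}{\psi}\sqrt{[(r_1+r_2)^2-\psi^2][\psi^2-(r_1-r_2)^2]}$ is the length of the common chord and is non-negative on this interval. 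Hence $A_p$ is non-increasing in $\psi$, and therefore non-increasing in $\tau_s$.

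Combining the pieces: for $\tau_s\leq\tau_s'$ the integrand at $\tau_s'$ is pointwise at most the integrand at $\tau_s$, so $\phi(\tau_s')\leq\phi(\tau_s)$, which is the claim. The main obstacle is the geometric monotonicity of disk-intersection area in the centre distance; this is intuitive but needs the short differentiation argument above. Note that the reasoning makes no use of the specific form of $p_\tau$ or of the channel factor densities $q$ and $\zeta$, so the conclusion is insensitive to the channel modelling choices and to the distribution of $\tauup_r$.
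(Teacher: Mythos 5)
Your proposal is correct and follows essentially the same route as the paper: the paper's proof likewise observes that $\tau_s$ enters Eq.~\ref{eq_cluster_coefficient} only through $A_p$ and asserts (without detail) that $A_p$ is non-increasing in $\tau_s$ by Eq.~\ref{eq_intersection}. Your additional steps---factoring the dependence through the non-decreasing centre distance $\psi(\theta,\tau_s)$ and verifying that the lens area decreases in $\psi$ with derivative equal to minus the common chord length---simply supply the computation the paper labels as easy.
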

\begin{IEEEproof}
In Eq. \ref{eq_cluster_coefficient}, the only term determined by $\tau_s$ is the size of the intersectional area $\int_0^{\pi}A_p(\theta,\tau_s,r_N(z_1,\omega_1),r_N(z_2,\omega_2))d\theta$. It can be shown (easily from Eq. \ref{eq_intersection}) that for any given value of $\theta$, $A_p(\theta,\tau_s,r_N(z_1,\omega_1),r_N(z_2,\omega_2))$ is a non-increasing functions of $\tau_s$, then the conclusion follows that the clustering factor $\phi(\tau_s)$ is a monotone non-increasing function of $\tau_s$.
\end{IEEEproof}

Introducing the sleep time interval $\tau_s$, the proposed broadcast scheme separates consecutive transmissions in both space and time, allowing each transmission to be received by more new nodes that have not received the information during the previous transmission(s). 
This is more energy and bandwidth efficient than the traditional forwarding schemes (e.g. SI or SIR scheme) where an infectious node transmits the information to every susceptible node coming into the radio range. 
To quantitatively compare the energy and bandwidth efficiencies between a traditional SIR \footnote{We do not compare with the SI scheme \cite{chen-information-2010} because it does not have a proper mechanism of stopping transmission; hence its energy and bandwidth consumption cannot be bounded.} scheme and our scheme, we consider a SIR scheme with a perfect neighbour discovery mechanism 
where an infectious node only wakes up when there is a new node coming into the radio range. 
Then the energy and bandwidth efficiency of the SIR scheme can be calculated in the same way as Eq. \ref{eq_efficiency}:
\begin{equation}\label{eq_efficiency2}
\hat{Y}\triangleq\frac{R_0}{R_0-\lambda\pi r_0^2+1},
\end{equation} 
where $\lambda\pi r_0^2$ is the expected number of nodes receiving the first transmission and $R_0-\lambda\pi r_0^2+1$ is the expected number of transmissions required by an infectious node using an ideal SIR scheme to transmit a piece of information to $R_0$ nodes.

As can be seen in Fig. \ref{pic_ana_efficiency_compare}, the proposed scheme has a higher energy and bandwidth efficiency than the SIR scheme, especially when the density is large. This is because when the density is large, the traditional method needs to transmit frequently to every new neighbor whereas the proposed scheme can wait until the infectious node moves to a new region and it will then transmit to a set of new nodes at the same time.

%On the other hand, it can be more energy efficient to increase the time interval between two consecutive transmissions to wait for more uninformed nodes to come into the radio range. This paper provides detailed analysis on the impact of the time interval between two consecutive transmissions on the fraction of informed nodes and time delay of the information dissemination process. 

\begin{figure}[ht]
\centering\vspace{-5px}
\includegraphics[scale=0.6]{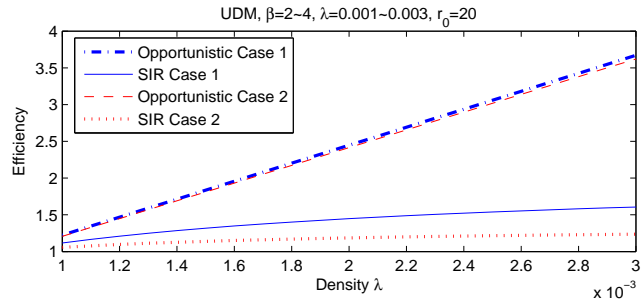}\vspace{-6px}
\caption{The energy and bandwidth efficiency metrics under the proposed scheme (given by Eq. \ref{eq_efficiency}) and the traditional SIR scheme (given by Eq.~\ref{eq_efficiency2}). Note that $R_0$ in Eq.~\ref{eq_efficiency2} takes the same value as the $R_0$ in Eq.~\ref{eq_efficiency}. In Case 1 and Case 2, the proposed scheme transmits $\beta=2$ and $\beta=4$ times respectively. Recall that the efficiency metric is a measure of the average effective node degree achieved per transmission.}\vspace{-6px}
\label{pic_ana_efficiency_compare}
\end{figure}

%Though a long separation between consecutive transmissions results in a higher energy and bandwidth efficiency, it needs to be noted that when a network designer increases $\tau_s$, the information dissemination delay can also be increased. Therefore, the next subsection studies the delay of information dissemination process.

%%%%%%%%%%%%%%%%%%%%%%%%%%%%%%%%%%%%%%
%%%%%%%%%%%%%%%%%%%%%%%%%%%%%%%%%%%%%%
%%%%%%%%%%%%%%%%%%%%%%%%%%%%%%%%%%%%%%
\subsection{Information dissemination delay}\label{section_delay}
Suppose that a piece of information is broadcast from an arbitrary node at time $t=0$ using the proposed broadcast scheme. Let $T(z)$ be the expected time when the fraction of informed nodes reaches $z$, for $0<z< 1$.

%%%%%%%%%%%%%%%%%%%%% delay2
%%%%%%%%%%%%%%%%%%%%%
\begin{thm}\label{thm_delay_lower}
Consider a network $\mathcal{G}(\eta,\sigma,\lambda,V,r_0,\beta,\tau_s,\tauup_r)$, whose effective node degree is $R_0$.  In a large network where $L\rightarrow\infty$ and $N=\lambda L^2\rightarrow\infty$, there holds
\begin{equation} \label{eq_delay_lower}
T(z)\geq \tau_{s} \left\lfloor 1+\frac{\ln{Nz}}{\ln(1+\frac{R_0}{\beta})} \right\rfloor .
\end{equation}
\end{thm}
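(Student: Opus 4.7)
The plan is to discretize time in units of $\tau_s$, derive a multiplicative growth recursion for the expected number of informed nodes, and invert it, paralleling the Galton--Watson branching-process reasoning already used in the proof of Theorem~\ref{thm_percolation}.

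First I would exploit the scheme's minimum transmit gap. Because $\tauup_r$ only takes non-negative values, any two consecutive transmissions by the same node are separated by at least $\tau_s$ seconds in wall-clock time, so in any window of length $\tau_s$ each currently-infectious node performs at most one transmission. Second, I would set up a per-round multiplicative bound. Let $n(t)$ denote the informed count and $n_k := \mathbb{E}[n(k\tau_s)]$, so $n_0 = 1$. By Definition~\ref{def1} the expected total number of distinct direct neighbours reached by a single infectious node over its $\beta$ transmissions is $R_0$; amortized over those $\beta$ transmissions, this yields at most $R_0/\beta$ expected new direct neighbours per transmission. Combined with the one-transmission-per-round constraint, this gives that the expected number of newly informed nodes in round $k{+}1$ is at most $(R_0/\beta)\,n_k$, so $n_{k+1}\le n_k(1+R_0/\beta)$ and hence $n_k\le (1+R_0/\beta)^k$ by induction from $n_0=1$.

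Third, I would invert the bound. For the expected informed fraction to reach $z$ at time $T(z)$, the integer $k = T(z)/\tau_s$ must satisfy $(1+R_0/\beta)^k \ge Nz$. Because $(1+R_0/\beta)^k$ is a strict upper bound on $n_k$, the smallest integer $k$ that can actually realize this inequality is $\lfloor 1+\ln(Nz)/\ln(1+R_0/\beta)\rfloor$, which yields the claimed inequality. The asymptotic hypothesis $N = \lambda L^2 \to \infty$ would be used, as in the proofs of Theorems~\ref{thm_percolation} and~\ref{thm_size}, to identify the random informed fraction with its mean-field expectation via a concentration / ODE-limit argument, so that the bound on $\mathbb{E}[n(\cdot)]$ lifts to a bound on $n(\cdot)$ itself.

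The main obstacle is rigorously justifying the per-round accounting in the growth step. The quantity $R_0/\beta$ is only an \emph{average} expected reach over the $\beta$ transmissions of a single node, not a valid upper bound on any \emph{particular} transmission: individual transmissions may reach strictly more than $R_0/\beta$ new neighbours while later ones reach fewer because of the spatial overlap captured by the clustering factor $\phi(\tau_s)$ of Definition~\ref{def_cluster_coefficient}. The key observation that rescues the round-by-round recursion is that, summed over all $\beta$ transmissions of any given infectious node, the number of \emph{new} infections it produces is bounded in expectation by $R_0$ regardless of how those new infections are allocated across its individual transmissions (since each distinct direct neighbour is counted at most once in $R_0$). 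Aggregating this node-level amortized bound over all currently-infectious nodes, together with the one-transmission-per-round constraint established in the first step, is what lets the recursion $n_{k+1}\le n_k(1+R_0/\beta)$ go through even though the analogous per-transmission bound does not hold pointwise.
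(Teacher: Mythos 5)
Your proposal follows essentially the same route as the paper's proof: discretize time into rounds of length $\tau_s$ (the paper does this by constructing an auxiliary network in which every infectious node transmits synchronously at times $\tau_s, 2\tau_s,\dots$ and never recovers), invoke the mean-field limit as $N\to\infty$ to obtain the geometric recursion $a_k=a_{k-1}(1+R_0/\beta)$ from $a_0=1$, and invert it to get the floor expression. The amortization subtlety you flag --- that $R_0/\beta$ is an average over the $\beta$ transmissions rather than a bound on any single transmission --- is equally present in the paper's own argument, which simply assigns $R_0/\beta$ expected new infections per transmission without further comment, so your treatment is if anything more candid on this point.
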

% \gtrapprox approximately greater than

\begin{IEEEproof}
Recall that each infectious node has $\beta$ transmissions separated by a random time interval $\tau_s+\tauup_r$. In this proof, we obtain a lower bound on the delay $T(z)$ by constructing a new network (denoted by $\mathcal{G}_{L}$) where the time interval between any two consecutive transmissions of each node is a constant $\tau_{s}$ and the infectious nodes never recover. Then in the new network $\mathcal{G}_{L}$, all infectious nodes transmit simultaneously at time $\tau_{s},2\tau_{s},3\tau_{s},\dots$. Hence, the delay of information dissemination in the new network is a lower bound on the delay in the original network because 1) each node retransmits the information sooner (sleep time interval is $\tau_{s}$ in the new network compared with $\tau_s+\tauup_r$ in the original network), and 2) the number of transmissions of each node is not smaller than or equal to that in the original network, and 3) the number of infectious nodes created by each infectious node is not smaller than that in the original network (to be described in the next paragraph).

%%%%%%%%%%%%%%%%%%%%%%%%%%%%%%%%%%%%%%%%%%%%%%%%%%%%%%%%%%%%
%%%%%%%%%%%%%%%%%%%%%%%%%%%%%%%%%%%%%%%%%%%%%%%%%%%%%%%%%%%%
%%%%%%%%%%%%%%%%%%%%%%%%%%%%%%%%%%%%%%%%%%%%%%%%%%%%%%%%%%%%
%%%%%%%%%%%%%%%%%%%%%%%%%%%%%%%%%%%%%%%%%%%%%%%%%%%%%%%%%%%%
Let $a_k$ be the total number of infectious nodes at time $k\tau_{s}$ in the new network $\mathcal{G}_{L}$, for $k\geq 0$. Initially, there is $a_0=1$. 
Denote by $Q_i\in\{0,1,2,\dots\}$ the number of nodes directly connected to node $i$ at a randomly-chosen time instant. 
Due to uniform distribution of nodes and the fact that nodes move independently of one another, it can be shown that $q_i$ has an identical distribution across all nodes. %Hence we can drop the subscript and let $q=q_i$. 
Further, there is $\mathbb{E}[Q_i]=R_0$. 
%
%It can be shown that $Q_i$ follows a Binomial distribution $\textrm{B}(N,\frac{R_0}{\beta N})$ with expected value $\mathbb{E}[Q_i]=\frac{R_0}{\beta}$. 
%
%For $k=1$, it is straightforward that $a_1=a_0+q_i$. 
%%
%When $k>1$, 
Note that some of the nodes directly connected to an infectious node (say node $i$) may have already received the information. Hence the number of new infectious nodes created by infectious node $i$ at each transmission is not larger than $Q_i$. To obtain a lower bound on the delay, we consider that the number of new infectious nodes created by an infectious node ($i$) at a transmission is equal to $Q_i$. 

It is straightforward that the total number of infectious nodes at time $k\tau_s$ obeys $a_k=a_{k-1}+\sum_{i=1}^{a_{k-1}}Q_i$, but an explicit form for $a_k$ is difficult to find \cite{harris-the-2002}. To simplify the analysis, an asymptotic model is used. 
Specifically, as we let $L\rightarrow \infty$ while keeping the density $\frac{N}{L^2}$ unchanged, the expected number of children $\mathbb{E}[Q_i]$ does not vary as $N\rightarrow\infty$. 
Further because $a_k$ only depends on $a_{k-1}$ and the distribution of $Q_i$, according to the mean field limit \cite{isham-stochastic-2004}, as $N\rightarrow\infty$, the number of nodes in the $k^{th}$ generation converges almost surely to the  deterministic form
$
a_k= a_{k-1}+a_{k-1}\mathbb{E}[Q_i]
=a_{k-1}\left(1+\frac{R_0}{\beta}\right).
$

Then because $a_0=1$, it follows that for $N\rightarrow\infty$,
$
a_k=\left(1+\frac{R_0}{\beta}\right)^{k} \label{eq_lower_bound_ak}
$ and
\begin{eqnarray}
\lim_{n\rightarrow\infty}T(z)
\geq \mathbb{E}\left [\tau_{s} \arg\max_{k} \left(a_k \leq Nz \right) \right] 
= \tau_{s} \mathbb{E}\left[\arg\max_{k} \left((1+\frac{R_0}{\beta})^{k} \leq Nz \right)\right] 
%&=& \tau_{s} \arg\max_{k} \left(k\leq 1+\frac{\ln\frac{Nz\beta}{R_0}}{\ln(1+\frac{R_0}{\beta})} \right)\\
= \tau_{s} \left\lfloor 1+\frac{\ln{Nz}}{\ln(1+\frac{R_0}{\beta})} \right\rfloor  .
\end{eqnarray}
\end{IEEEproof}

\subsection{Optimization}\label{section_optimization}
In the previous sections, we postulated that the consecutive transmissions of an infectious node are separated by a random time interval $\tau_s+\tauup_r\triangleq\tau$. 
The randomly distributed time interval $\tau$ reflects the uncertainty in the channel access time in practical scenarios. This section disregards the technology limitation (i.e. we assume that channel access time can be pre-determined and there is no contention or collision between concurrent transmissions of different nodes) and investigates the optimal probability distribution of $\tau$ that maximises the effective node degree when all other parameters (i.e. $\eta,\sigma,\lambda,V,r_0$ and $\beta$) are fixed.

Note that this section conducts optimization under the UDM for $\beta=2$ only (where the $R_0$ is equal to the expression given in Eq. \ref{eq_R0}). The same method can be applied to other scenarios which however involve a non-trivial complex analysis (of the intersection area of multiple circles) and hence are left as future work. The performance of the information dissemination process using the optimal broadcast scheme is shown and discussed later in Section \ref{section_RDM_simulation}.

\begin{thm}\label{thm_optimization}
Denote by $p_\tau(\tau)$ the pdf of the random time interval $\tau$. In networks under UDM, consider a set of proposed broadcast schemes with $\beta=2$ but using different sleep strategies, i.e. different values of $\tau_s$ and different distributions for $\tauup_r$. Among all sleep strategies with the same mean sleep time interval $\mathbb{E}[\tau]$, the optimal one that maximises the effective node degree $R_0$ is a strategy using a constant sleep time interval with length $\mathbb{E}[\tau]$.
\end{thm}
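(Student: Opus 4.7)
My plan is to reduce the optimization to a Jensen-type inequality applied to the clustering factor. Under the unit disk model with $\beta=2$, Theorem~\ref{thm_R0} becomes an equality and reads $R_0 = 2\lambda\pi r_0^2 - \phi$, where the only term depending on the sleep strategy is the clustering factor $\phi$. Writing $\tau \triangleq \tau_s+\tauup_r$ and noting that under UDM both radii in Lemma~\ref{lem_cluster_coefficient} collapse to $r_0$, Eq.~\ref{eq_cluster_coefficient} simplifies to
\begin{equation*}
\phi \;=\; \frac{\lambda}{\pi}\,\mathbb{E}_{\tau}\!\left[\int_0^{\pi} A_p\!\bigl(\theta,\tau,r_0,r_0\bigr)\,d\theta\right],
\end{equation*}
where the expectation is over the random time interval $\tau$ with mean $\mathbb{E}[\tau]$ fixed. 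Hence maximising $R_0$ over sleep strategies with prescribed $\mathbb{E}[\tau]$ is equivalent to minimising the right-hand side above.

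The core step is to prove that, for every fixed $\theta\in[0,\pi)$, the function $\tau\mapsto A_p(\theta,\tau,r_0,r_0)$ is convex on $[0,\infty)$. Note that $A_p$ depends on $\tau$ only through $\psi(\theta,\tau)=2V\tau\sin(\theta/2)$, which is affine in $\tau$; so it suffices to show that the intersection area of two disks of common radius $r_0$ is convex in the centre-to-centre distance $\psi\in[0,\infty)$. Using Eq.~\ref{eq_intersection} with $r_1=r_2=r_0$, one checks by direct differentiation that on $[0,2r_0]$ the first derivative with respect to $\psi$ equals $-2\sqrt{r_0^2-\psi^2/4}$ (the chord length), which is non-decreasing in $\psi$; on $[2r_0,\infty)$ the area is identically zero, and the left and right derivatives match at $\psi=2r_0$. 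This makes $A_p$ a $C^1$ convex function of $\psi$, hence of $\tau$.

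With convexity in hand, Jensen's inequality yields, for every $\theta$,
\begin{equation*}
\mathbb{E}_{\tau}\bigl[A_p(\theta,\tau,r_0,r_0)\bigr]\;\ge\;A_p\bigl(\theta,\mathbb{E}[\tau],r_0,r_0\bigr).
\end{equation*}
Integrating over $\theta\in[0,\pi)$ and multiplying by $\lambda/\pi$ gives $\phi \ge \phi_{\mathrm{det}}$, where $\phi_{\mathrm{det}}$ is the clustering factor of the degenerate scheme that sets $\tau \equiv \mathbb{E}[\tau]$ deterministically. Equivalently, $R_0 \le R_0^{\mathrm{det}}$, proving the theorem. Equality is attained precisely when $\tau$ is almost surely constant, because on any subinterval of $[0,2r_0/(2V\sin(\theta/2))]$ the map $\tau\mapsto A_p$ is strictly convex, and $\theta$ ranges over a set for which $\sin(\theta/2)>0$ with positive measure.

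The main obstacle I anticipate is the convexity verification: Eq.~\ref{eq_intersection} is piecewise, so one must carefully check that convexity is preserved across the breakpoints $\psi=|r_1-r_2|=0$ and $\psi=r_1+r_2=2r_0$. With $r_1=r_2=r_0$ the first breakpoint is degenerate, and the second is handled by the derivative-matching observation above; writing this cleanly (for instance by invoking the chord-length expression for $A_p'$ on $[0,2r_0]$ and its vanishing at $\psi=2r_0$) is the only nontrivial calculation required.
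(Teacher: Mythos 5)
Your proof is correct and follows the same overall route as the paper's: under the UDM with $\beta=2$ the bound of Theorem~\ref{thm_R0} is an equality, $R_0=2\lambda\pi r_0^2-\phi$, so maximising $R_0$ reduces to minimising the clustering factor, and Jensen's inequality applied to the convex map $\tau\mapsto A_p(\theta,\tau,r_0,r_0)$ shows that a deterministic sleep interval of length $\mathbb{E}[\tau]$ is optimal. The two arguments differ only in how convexity is certified. The paper differentiates Eq.~\ref{eq_intersection} twice in $\tau$ on the interior branch and simplifies to $a^3\tau\,(4r_0^2-a^2\tau^2)^{-1/2}>0$ for $0<a\tau<2r_0$; you instead invoke the chord-length identity $\partial A_p/\partial\psi=-\sqrt{4r_0^2-\psi^2}$ and observe that this derivative is non-decreasing on $[0,2r_0]$ and vanishes at $\psi=2r_0$, where $A_p$ switches to the identically-zero branch. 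Your version is slightly more complete: because $A_p$ is defined piecewise, positivity of the second derivative on the open interior piece alone does not establish convexity on all of $[0,\infty)$ unless one also checks that the one-sided derivatives match at the breakpoint $\psi=2r_0$ (and that nothing goes wrong at $\psi=0$, which is degenerate when $r_1=r_2$); the paper's computation leaves this junction implicit, while you handle it explicitly. Your remark on the equality case (strict convexity on $(0,2r_0/a)$ for a positive-measure set of $\theta$, hence strict suboptimality of any non-degenerate distribution of $\tau$) is a small additional refinement not present in the paper, though it is not needed for the statement as given.
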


\begin{IEEEproof}
Note that under UDM $r_1=r_2=r_0$. Take the second derivative of Eq. \ref{eq_intersection} with regard to $\tau$, there results
\begin{eqnarray}\label{eq_optimization_1}
&&\hspace{-20px}\frac{\partial^2}{\partial \tau^2} A_p(\theta,\tau,r_0,r_0)=
\hspace{-1px}\begin{cases}
0, ~~\mathrm{for}~\psi(\theta, \tau)= 0, \\
\frac{\partial^2  }{\partial \tau^2} 2 r_0^2\arccos(\frac{\psi(\theta, \tau)}{2r_0})
-\frac{\partial^2 \frac{1}{2}\hspace{-1px}\sqrt{[(2r_0)^2-\psi^2(\theta, \tau)][\psi^2(\theta, \tau)]}}{\partial \tau^2},\\
~~~\mathrm{  for ~~ } 0<\psi(\theta, \tau)<2r_0\cr
 0, ~~\mathrm{otherwise}.
\end{cases}
\end{eqnarray}

First consider the second derivative of Eq. \ref{eq_intersection} for $0<a\tau<2r_0$. 
Let $a=2V\sin\frac{\theta}{2}$. Then $\psi(\theta, \tau)=a\tau$. Then there holds
\begin{eqnarray}
\mathcal{A}_1&\triangleq&\frac{\partial^2}{\partial \tau^2} 2 r_0^2\arccos\left(\frac{\psi(\theta, \tau)}{2r_0}\right)  
=\frac{\partial^2}{\partial \tau^2} 2 r_0^2\arccos\left(\frac{a\tau}{2r_0}\right) 
=-\frac{2 a^3 \tau r_0^2}{\sqrt{4r_0^2-a^2 \tau^2}(4r_0^2-a^2 \tau^2)}. 
\end{eqnarray}

Further, 
\begin{eqnarray}
\mathcal{A}_2&\triangleq&\frac{\partial^2}{\partial \tau^2}\frac{1}{2}\hspace{-1px}\sqrt{[(2r_0)^2-\psi^2(\theta, \tau)][\psi^2(\theta, \tau)]} 
%=\frac{\partial^2}{\partial \tau^2}\frac{1}{2}\hspace{-1px}\sqrt{[(2r_0)^2-a^2\tau^2][a^2\tau^2]}
=\frac{a^5 \tau^3  -6  a^3 r_0^2 \tau}{(4 r_0^2-a^2 \tau^2) \sqrt{4 r_0^2 -a^2 \tau^2}}.
\end{eqnarray}

Then the second derivative of Eq. \ref{eq_intersection} for $0<a\tau<2r_0$ is given by
\begin{eqnarray}\label{eq_optimization_2}
\mathcal{A}_1-\mathcal{A}_2
=\frac{-2 a^3 \tau r_0^2-a^5 \tau^3  +6  a^3 r_0^2 \tau}{ (4 r_0^2-a^2 \tau^2)\sqrt{4r_0^2-a^2 \tau^2}} 
=a^3 \tau \frac{4 r_0^2-a^2 \tau^2 }{ (4 r_0^2-a^2 \tau^2)\sqrt{4r_0^2-a^2 \tau^2}} 
>0.
\end{eqnarray}

It is evident that $A_p(\theta,\tau,r_0,r_0)$ is a convex function of $\tau$. According to the Jensen's inequality, there holds
$
\mathbb{E} [A_p(\theta,\tau,r_0,r_0)] \geq A_p(\theta,\mathbb{E} [\tau],r_0,r_0).
$

Then according to Lemma \ref{lem_cluster_coefficient}, the clustering factor is
\begin{eqnarray}
\phi(\tau) =  \int_0^{\infty}\int_0^{\pi} A_p(\theta,\tau,r_0,r_0)  \frac{\lambda}{\pi} p_\tau(\tau) d\theta d\tau_r 
%&=& \int_0^{\pi} \frac{\lambda}{\pi} \mathbb{E}_\tau [A_p(\theta,\tau,r_0,r_0) ] d\theta\nonumber\\
\geq \int_0^{\pi} \frac{\lambda}{\pi} A_p(\theta,\mathbb{E} [\tau],r_0,r_0)d\theta.
\end{eqnarray}
 
Finally, according to Theorem \ref{thm_R0}, under UDM when $\beta=2$,  $R_0=2 \lambda\pi r_0^2 -\phi(\tau)$.

This means that among all distributions of the sleep time interval $\tau$ with mean $\mathbb{E} [\tau]$, the case where consecutive transmissions separate by a constant $\mathbb{E} [\tau]$ minimises the clustering factor $\phi(\tau)$, consequently maximises the effective node degree.
\end{IEEEproof}

%The above theorem reveals that in an ideal case where there is no contention or collision between concurrent transmissions, the optimal strategy that maximises the effective node degree is to assign a constant sleep time interval to every node.

Note that only Theorem 9 has the limitation
of $\beta=2$, while other results presented in the paper can be applied
to any positive integer value of $\beta$.

\begin{rem}
Note that the result of Theorem
9 only applies to the case where nodes move at the same constant speed.
In the case where nodal speed is randomly distributed, the optimal
sleep interval depends on the speeds of the active nodes as well as
the speeds of other nodes, or more precisely, depends on the relative
speeds between the active nodes and other nodes. Specifically, when
an active node moves slower than V and other nodes move at speed V,
the active node needs to sleep a longer time in order to move away
from its previous transmission location. On the other hand, if the
\textcolor{blue}{active node} moves slower than V but other nodes move faster than
V, then the active node may not necessarily need a longer sleep time
interval because other nodes can move away from the active node. Therefore,
in the case where nodal speed is randomly distributed, there does
not exist a simple optimal sleep time interval that maximises the
effective node degree of every node. An algorithms need to be designed
to adjust the sleep time interval of a node dynamically according
to the relative speeds of the active node and all its neighbours.
\end{rem}

\begin{rem}
The randomly distributed time interval
introduced in Section III-B reflects the uncertainty in the channel
access time in practical scenarios. Theorem 9 disregards this limitation
(i.e. we assume that channel access time can be pre-determined and
there is no contention or collision between concurrent transmissions)
and investigates the optimal probability distribution that maximises
the effective node degree. 
\end{rem}

\section{Simulation results}\label{section_simulation}
%This section reports on simulations to verify the accuracy of the analytical results. The simulations are conducted using a MANET simulator written in C++. %This section presents both the simulation based on the random direction mobility model and the simulation driven by a real world mobility trace.

\subsection{Random direction mobility simulation}\label{section_RDM_simulation}
Initially, $N=1280$ nodes are uniformly deployed on a torus $(0,800]^2$, i.e. density is $\lambda=0.002$ (nodes/$m^2$), which is the population density in Sydney, Australia \cite{city_population_density}. After deployment of the nodes, they start to move according to the mobility model introduced in Section \ref{section_system_model}. The speed $V$ is set to 10 (m/s) %(typical vehicle moving speed \cite{rudack-on-2002})
. The sleep time interval is a constant, i.e. $\tauup_r=0$, except for Fig.~\ref{pic_optimization}. %Other distributions of $\tauup_r$, e.g. under certain CSMA protocols, show a similar trend hence not shown here. 
%
%Every point shown in the simulation result is the average value from 500 simulations; the confidence interval is too small to be distinguishable and so is omitted in the figures.

Fig. \ref{pic_burst_log_all} shows the percolation probability and the expected fraction of informed nodes. 
It can be seen that both metrics improve as either of $\tau_{s}$ or $\sigma$ increases, owing to the reduction of the clustering factor as shown in Section \ref{section_efficiency}. 
Our analytical bounds are close to the simulation results as shown in Fig. \ref{pic_burst_log_all}(a), while the discrepancy in Fig. \ref{pic_burst_log_all}(b) is caused by the well-mixing assumption (c.f. \cite{zhang-on-2011-manet}) used in deriving the fraction of informed nodes, which however requires another non-trivial analysis to adjust.

\begin{figure}[ht]
\centering\vspace{-5px}
\includegraphics[scale=0.6]{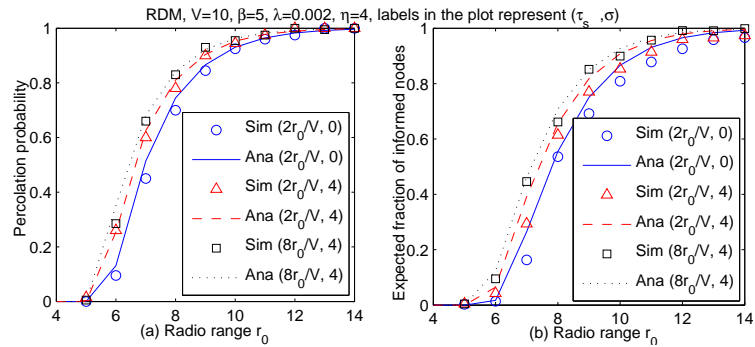}\vspace{-6px}
\caption{Analytical (Ana) and simulation (Sim) results of the percolation probability and expected fraction of informed nodes, using different network parameters. Analytical results are the upper bounds obtained by combining Theorem \ref{thm_percolation}, \ref{thm_size} and \ref{thm_R0}. Simulation result for percolation probability shows the probability of having at least 10\% informed nodes in the steady state.}\vspace{-5px}
\label{pic_burst_log_all}
\end{figure}

Note that the percolation probability defined
in Definition 3 is an asymptotic figure of merit, whereas any simulation
can only be conducted on a finite network. The simulation in this
paper is conducted in a finite but large network with 1280 nodes.
The simulation result for percolation probability shows the probability
of having at least $x=10$\% informed nodes in the steady state. Note
that we have tried some different values for $x$ ranging from 5\% to
20\% and the plots are similar. It can be seen that the percolation model, though constructed for an infinite network, is of predictive value for networks likely to be encountered in real world.

Fig. \ref{pic_burst_rayleigh} illustrates the impact of Rayleigh fading on the percolation probability and the expected fraction of informed nodes. It can be seen the Rayleigh fading has a negative impact on the information dissemination process, which can also be seen from the analysis (c.f.  Lemma \ref{lem_cluster_coefficient_mono}).

\begin{figure}[ht]
\centering\vspace{-5px}
\includegraphics[scale=0.6]{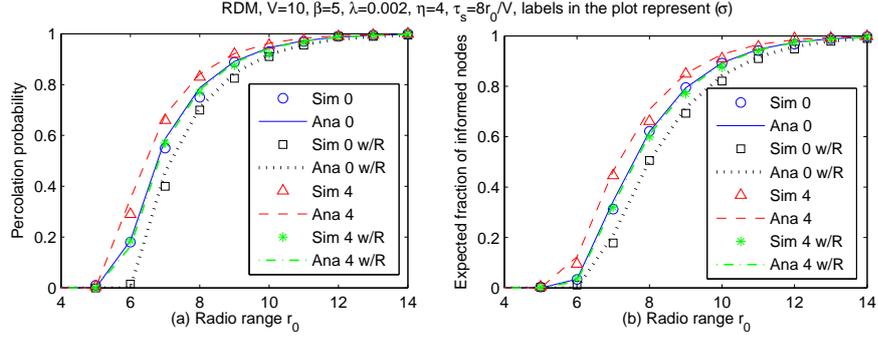}\vspace{-6px}
\caption{The percolation probability and expected fraction of informed nodes, using different network parameters, where ``w/R'' represents ``with Rayleigh fading''. Note that the curves of ``Ana 0'' and ``Ana 4 w/R'' almost overlap hence hard to distinguish.}\vspace{-6px}
\label{pic_burst_rayleigh}
\end{figure}

\begin{figure}[ht]
\centering\vspace{-5px}
\includegraphics[scale=0.65]{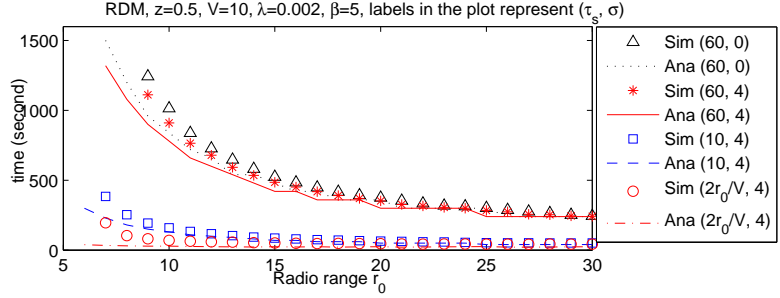}\vspace{-6px}
\caption{The lower bound on the delay for a piece of information to be received by $50\%$ of nodes using different network parameters. For the sake of comparison, the value of $\tau_{s}$ is kept constant (i.e. $10$ or $60$ seconds) while the value of $r_0$ being increased in the first six curves.}\vspace{-6px}
\label{pic_burst_delay_lower}
\end{figure}

Fig. \ref{pic_burst_delay_lower} shows the lower bound of the delay for a piece of information to be received by 50\% of nodes. It can be seen that the length of the sleep time interval has a great impact on the delay and our analytical results provide valid lower bounds on the delay. It is clear that a longer sleep time interval can cause a longer delay. A network designer needs to consider the trade-off between delay and resource consumption.

Further, Fig. \ref{pic_burst_log_all}, Fig \ref{pic_burst_rayleigh} and Fig. \ref{pic_burst_delay_lower} suggest that shadowing effects benefit the information dissemination process in terms of percolation probability, expected fraction of informed nodes and delay, because an increase in $\sigma$ leads to an increase in the effective node degree $R_0$ as shown in Theorem \ref{thm_R0}. This is in sharp contrast with previous conclusions, e.g. \cite{qin-on-2003}, because traditional routing algorithms like AODV need to establish a route before sending data, whilst the route is unstable due to dynamic topology and channel randomness.

Moreover, the proposed scheme and the analysis presented in the paper can also be useful for uni-cast. For example, consider that a source node sends a uni-cast packet to a randomly-chosen destination node using the proposed broadcast scheme. Then it is interesting to note that the probability that the packet reaches the randomly-chosen destination at time $T(z)$ is $z$ (c.f. Theorem \ref{thm_delay_lower}). Moreover, the probability that the packet reaches the randomly-chosen destination at the steady state is given by the expected fraction of informed node $z_0$ (c.f. Theorem \ref{thm_size}).

\begin{figure}[ht]
\centering\vspace{-5px}
\includegraphics[scale=0.7]{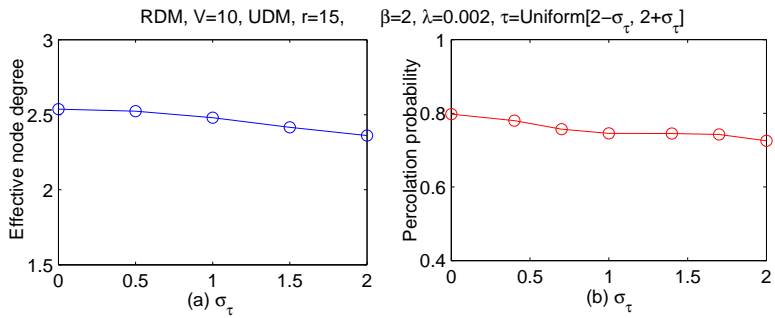}\vspace{-6px}
\caption{Simulation results of (a) the effective node degree and (b) the percolation probability in networks using the proposed broadcast scheme where the time interval between consecutive transmissions of an infectious node is uniformly distributed in $[2-\sigma_\tau,2+\sigma_\tau]$, where $\sigma_{\tau}$ is varied from 0 to 2.}\vspace{-5px}
\label{pic_optimization}
\end{figure}

Fig. \ref{pic_optimization} (a) verifies the results of Theorem \ref{thm_optimization} that a constant time interval between consecutive transmissions leads to the largest effective node degree, which further leads to the largest percolation probability as shown in Fig. \ref{pic_optimization} (b). Note that in a practical scenario where concurrent transmissions of different nodes can cause wireless channel contention and transmission failure, a further analysis is required to characterize the effective node degree and the percolation probability, which is left as future work.

%%%%%%%%%%%%%%%%%%%%%%%
%%%%%%%%%%%%%%%%%%%%%%%
%%%%%%%%%%%%%%%%%%%%%%%
\subsection{Energy and bandwidth efficiency
simulation results}

This subsection presents the simulation results
evaluating the energy and bandwidth efficiency of the proposed scheme.
As described in Section IV-E, the consumption of energy and bandwidth
for the dissemination of a piece of information is an increasing function
of the number of transmissions for the packets containing the information.
According to a well-known study by Feeney et al. in \cite{feeney-investigating-2001},
using 11Mbps wireless transmission, the energy consumed by a node
to broadcast a packet of size 1K bytes is 482$\mu$J, and the energy
consumed by a node to receive a broadcast packet of size 1K bytes
is 76$\mu$J. Therefore, the overall energy consumption, including
the transmitting and receiving energy, is $E_{c}=482+\mathbb{E}[\mathcal{D}_{i}]76$,
where $\mathbb{E}[\mathcal{D}_{i}]$ is the average node degree, which
is straightforwardly equal to the right hand side of Eq. \ref{eq_R0} (i.e. it is no longer an upper bound) by letting
$\beta=1$.

Fig. \ref{pic_no_transmissions1_energy} shows the simulation results of
the overall energy consumption for the dissemination of a piece of
information using the proposed efficient scheme and the traditional
SIR scheme. Firstly, it can be seem that the proposed scheme has a
significantly smaller energy consumption compared with the traditional
SIR scheme. Further, the increase in the density of nodes has much
less impact on the overall energy consumption than the SIR scheme.
This is because a larger nodal density causes more overlaps between
consecutive transmissions, which leads to a larger clustering factor
and consequently more energy consumption for the traditional SIR scheme.
On the other hand, the enforced gaps between transmissions under the
proposed scheme reduce the aforementioned overlaps hence the proposed
scheme is more energy efficient.

\begin{figure}[ht]
\centering
\includegraphics[scale=0.6]{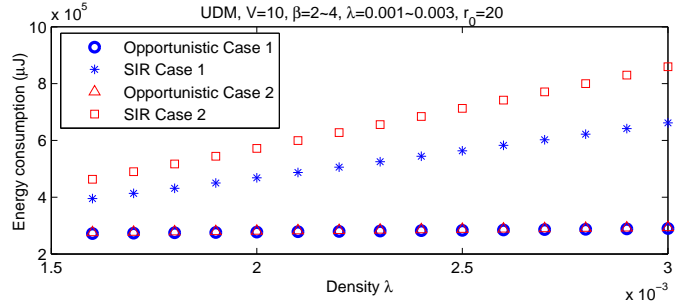}\vspace{-6px}
\caption{Simulation results of the overall energy
consumption using the proposed scheme and the traditional SIR scheme
for a packet to be received by 99\% of nodes in a network. In Case
1 and Case 2, $\beta=2$ and 4 respectively.
}\vspace{-5px}
\label{pic_no_transmissions1_energy}
\end{figure}

Fig. \ref{pic_no_transmissions2_energy} evaluates the impact of the sleep
time interval $\tau_{s}$ on the energy consumption. Note that other
things being equal, the energy and bandwidth consumption is a linearly
increasing function of the number of transmissions. It can be seen
that increasing the sleep time interval reduces the energy consumption.
This is because a larger gap between transmissions leads to less overlaps
and the same reason applies to the network under the unit disk model
(labeled $\sigma=0$), the network under log-normal shadowing model
(labeled $\sigma=4$) and the networks subject to Rayleigh fading
(labeled \emph{with Rayleigh}).

\begin{figure}[ht]
\centering
\includegraphics[scale=0.7]{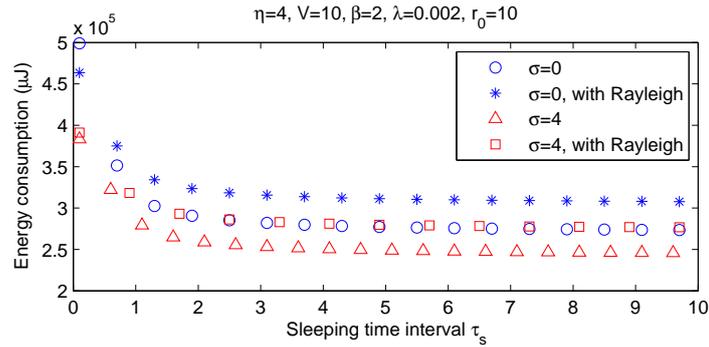}\vspace{-6px}
\caption{Simulation results of the overall energy
consumption using the proposed scheme and the traditional SIR scheme
for a packet to be received by 99\% of nodes is a network. The impact
of the sleep time interval $\tau_{s}$ on the overall energy consumption
is evaluated by letting the random component $\tau_{r}$ take a constant
value 0.
}\vspace{-5px}
\label{pic_no_transmissions2_energy}
\end{figure}

Fig. \ref{pic_no_transmissions3_energy} compares the proposed scheme with
two classic schemes in the literature: the SI scheme and the SIR scheme.
As introduced in Section II, a large number of studies in this area
considered the Susceptible-Infectious (SI) scheme, where every node
carries the received information and forwards it to all nodes coming
into the radio range. In particular, Thedinger et al. proposed a SI-based
broadcast scheme in \cite{thedinger-store-2010}, where each node
broadcasts the received information to its directly-connected neighbours
repeatedly and the consecutive re-broadcasts are separated by a fixed
time interval. Using a SI-based scheme, every node needs to re-broadcast
the received packet. For a fair comparison, we stop the re-broadcast when the packet is received by 99\% of nodes in the
network. It can be seen in Figure 3 that the SIR-based model requires
a significantly smaller overall energy consumption compared with SI-based
model. This is because the recovery mechanism in the SIR model limits
the number of transmission of each node to save energy. Moreover,
as described earlier, the scheme proposed in this paper further reduces
the energy consumption by reducing the overlaps between consecutive
transmissions and thereby reducing the number of transmissions to
nodes already having a copy of the packet. Therefore it can be seen
that the proposed scheme requires less number of transmissions for
a packet to be received by the same percentage of nodes in a network.

\begin{figure}[ht]
\centering
\includegraphics[scale=0.6]{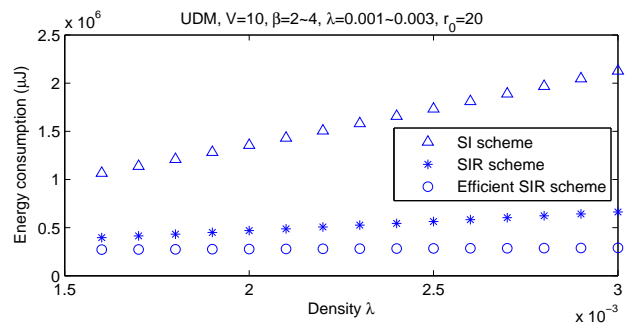}\vspace{-6px}
\caption{Simulation results of the overall energy
consumption using the proposed scheme and the other classic schemes
for a packet to be received by 99\% of nodes in a network.
}\vspace{-6px}
\label{pic_no_transmissions3_energy}
\end{figure}

%%%%%%%%%%%%%%%%%%%%%%%
%%%%%%%%%%%%%%%%%%%%%%%
%%%%%%%%%%%%%%%%%%%%%%%
\subsection{Real mobility trace simulation}\label{section_trace_simulation}
In this subsection, we consider a MANET driven by a real world trace of the cabs in the San Francisco Bay Area \cite{epfl_trace}. 
%We choose the cab trace because it contains the mobility of a relatively large number of entities (over 500 cabs) compared with other traces (less than 100 entities in \cite{eagle_reality_2006,haggle_trace,mit-reality,haggle_trace2}). Hence the cab trace can be used to evaluate the validity of the analytical results in a large scale network when the node mobility and distribution deviate from assumptions described in Section \ref{section_system_model}.
%
The cab trace contains GPS coordinates of 536 cabs over approximately 30 days. Linear interpolation was used to increase the trace granularity. %Further, the Euclidean distance between cabs is calculated using Vincenty's formula \cite{vincenty_formula}. 
We focus on the downtown area (3.5km$\times$3.5km) as indicated by the square in Fig.~\ref{pic_trace_area}. 
A piece of information is broadcast from an arbitrary cab in the downtown area using the scheme proposed in Section~\ref{section_system_model}. When information dissemination stops, we count the fraction of cabs in the downtown area that have received the information.  

It is obvious that the taxis do not follow
the mobility model used in the analysis. Actually there has been no
analytical model that can fully characterize the movement of real
world devices. The analysis based on analytical models under some idealizing assumptions is used to predict the real-world situations. We use the cab trace to evaluate the analytical results
in a large scale network when the node mobility and distribution deviate
from assumptions described in Section III, and to identify the limitations
of the analysis and present possible usage in estimating the performance
of information dissemination in real world dynamic networks. It is
interesting to see in the following figures that the impacts of varying
each network parameters on the network performance predicted by the
analysis highly coincide with the simulations results.

\begin{figure}[ht]
\centering
\includegraphics[scale=0.39]{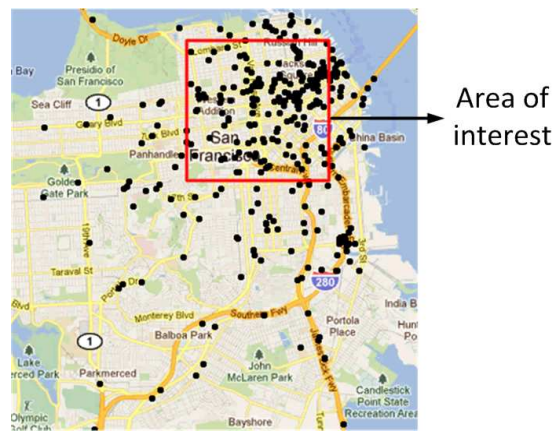}\vspace{-6px}
\caption{A snapshot of the San Francisco Bay Area. The dots represent the positions of the cabs at a particular time instant. The square represents the downtown area we are interested in. 
}\vspace{-8px}
\label{pic_trace_area}
\end{figure}

\begin{figure}[ht]
\centering\vspace{-5px}
\includegraphics[scale=0.7]{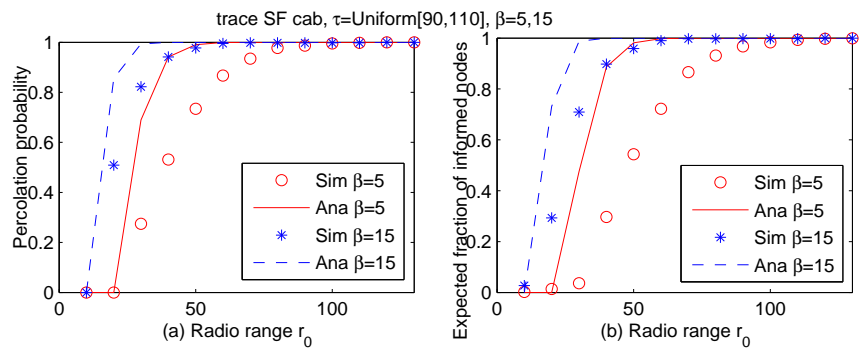}\vspace{-6px}
\caption{Percolation probability and the expected fraction of informed nodes for the MANET driven by cab trace.}\vspace{-8px}
\label{pic_all_trace}
\end{figure}

Fig. \ref{pic_all_trace}(a)(b) shows the percolation probability and the expected fraction of informed nodes in the steady state. %It is interesting to see that when the mobility model and the node distribution deviate from assumptions used in the analysis, simulation results still have a similar trend as our analysis. 
%
%Using the analytical results provided in this paper, a network designer can readily calculate that in order to broadcast an announcement to say 60\% of cabs in the downtown area using wireless devices whose radio range is around 10m, every cab should forward the announcement for \emph{at least} 10 minutes.
%
It can be seen that though the analytical result and simulation result exhibit the same trend, the fraction of informed nodes in a MANET driven by the actual cab trace is lower than that predicted by analysis. The discrepancy is mainly caused by the inhomogeneity of the node distribution. Specifically, as can be seen in Fig. \ref{pic_trace_area}, the density of cabs in suburbs is lower than that in the downtown area. Therefore when an infectious node moves to the suburbs, it has little chance to inform other nodes. %In future studies on information dissemination among cabs with inhomogeneous spatial distribution, which is not the focus of this paper, the network area could be divided into sub-areas according to different node densities. Then, the effective node degrees in these sub-areas could be studied separately.

%%%%%%%%%%%%%%%%%%%%%%%
%%%%%%%%%%%%%%%%%%%%%%%
%%%%%%%%%%%%%%%%%%%%%%%
%%%%%%%%%%%%%%%%%%%%%%%
\section{Conclusion and future work}\label{section_conclusion}
This paper proposed an energy and bandwidth efficient broadcast scheme for MANETs subject to channel randomness. The proposed scheme is decentralized and simple to implement. Further, the performance of the network using the proposed scheme, measured by the percolation probability, expected fraction of informed nodes and delay, was analytically studied. The accuracy of the analytical results was verified by simulations.

In the future, we are going to investigate the performance of broadcast schemes where the distribution of $\tauup_r$ is determined by certain media access control protocols (e.g. CSMA). It is also interesting to investigate the throughput capacity of networks using the proposed  scheme, which is also expected to be better than that of networks using traditional epidemic schemes. Moreover, the analysis presented in this
work can be extended to consider the networks with infrastructure
support. More specifically, a multi-type branching process \cite{harris-the-2002}
can be used to replace the Galton-Watson branching process in Theorem
5 to extend this work to heterogeneous networks. 
\vspace{-5px}

\bibliographystyle{IEEEtran}
\bibliography{endnote}

\end{document}